\documentclass[twocolumn,aps,pra,superscriptaddress,10pt]{revtex4-2}
\usepackage[utf8]{inputenc}
\usepackage[T1]{fontenc}
\usepackage{amsmath}
\usepackage{amsfonts}
\usepackage{amssymb}
\usepackage{amsthm}
\usepackage{bm}
\usepackage{dsfont}
\usepackage{graphicx}
\usepackage[dvipsnames,table]{xcolor}
\usepackage{nicefrac}
\usepackage{enumitem}
\usepackage{hyperref}
\usepackage{makecell}
\usepackage{braket}

\usepackage{pifont}
\newcommand{\cmark}{\ding{51}}%
\newcommand{\xmark}{\ding{55}}%

\usepackage{thmtools}
\declaretheorem[name=Theorem]{thm}

\begin{document}

	\title{Classically-Verifiable Quantum Advantage from a Computational Bell Test}
	\author{Gregory D. Kahanamoku-Meyer}
	\affiliation{Department of Physics, University of California at Berkeley, Berkeley, CA 94720}
	\author{Soonwon Choi}
	\affiliation{Department of Physics, University of California at Berkeley, Berkeley, CA 94720}
	\author{Umesh V. Vazirani}
	\affiliation{Department of Electrical Engineering and Computer Science, University of California at Berkeley, Berkeley, CA 94720}
	\author{Norman Y. Yao}
	\affiliation{Department of Physics, University of California at Berkeley, Berkeley, CA 94720}

	\begin{abstract}
	We propose and analyze a novel interactive protocol for demonstrating quantum computational advantage, which is efficiently classically verifiable.
	Our protocol relies upon the cryptographic hardness of trapdoor claw-free functions (TCFs).
	Through a surprising connection to Bell's inequality, our protocol avoids the need for an adaptive hardcore bit, with essentially no increase in the quantum circuit complexity and no extra cryptographic assumptions.
	Crucially, this expands the set of compatible TCFs, and we propose two new constructions: one based upon the decisional Diffie-Hellman problem and the other based upon Rabin's function, $x^2 \bmod N$.
	We also describe two independent innovations which improve the efficiency of our protocol's implementation: (i) a scheme to discard so-called ``garbage bits'', thereby removing the need for reversibility in the quantum circuits, and (ii) a natural way of performing post-selection which significantly reduces the fidelity needed to demonstrate quantum advantage. 
	These two constructions may also be of independent interest, as they may be applicable to other TCF-based quantum cryptography such as certifiable random number generation.
	Finally, we design several efficient circuits for $x^2 \bmod N$ and describe a blueprint for their implementation on a Rydberg-atom-based quantum computer. 
	\end{abstract}

	\maketitle

	\section{Introduction}

The development of large-scale programmable quantum hardware has opened the door to testing a fundamental question in the theory of computation: can quantum computers outperform classical ones for certain tasks?
This idea, termed quantum computational advantage, has motivated the design of novel algorithms and protocols to demonstrate advantage with minimal quantum resources such as qubit number and gate depth~\cite{aaronson_computational_2011, farhi2016quantum, bremner_average-case_2016, lund2017quantum, harrow2017quantum, terhal2018quantum, boixo_characterizing_2018, bouland_complexity_2019, aaronson_complexity-theoretic_2017, neill_blueprint_2018}.
Such protocols are naturally characterized along two axes: the computational speedup and the ease of verification.
The former distinguishes whether a quantum algorithm exhibits a polynomial or super-polynomial speedup over the best known classical one.
The latter classifies whether the correctness of the quantum computation is \emph{efficiently verifiable} by a classical computer.
Along these axes lie three broad paths to demonstrating advantage:
1) sampling from entangled quantum many-body wavefunctions,
2) solving a deterministic problem, e.g. prime factorization, via a quantum algorithm, and
3) proving quantumness through interactive protocols.

Sampling-based protocols directly rely on the classical hardness of simulating quantum mechanics~\cite{aaronson_computational_2011, bremner_average-case_2016, boixo_characterizing_2018, bouland_complexity_2019, aaronson_complexity-theoretic_2017, neill_blueprint_2018}.
The ``computational task'' is to prepare and measure a generic complex many-body wavefunction with little structure.
As such, these protocols typically require minimal resources and can be implemented on near-term quantum devices~\cite{arute_quantum_2019, zhong_quantum_2020}.
The correctness of the sampling results, however, is exponentially difficult to verify.
This has an important consequence: in the regime beyond the capability of classical computers, the sampling results cannot be explicitly checked, and quantum computational advantage can only be inferred (e.g.~extrapolated from simpler circuits).

Algorithms in the second class of protocols are naturally broken down by whether they exhibit polynomial or super-polynomial speed-ups.
In the case of polynomial speed-ups, there exist notable examples that are \emph{provably} faster than any possible classical algorithm \cite{bravyi_quantum_2018,bravyi_quantum_2019}.
However, polynomial speed-ups are tremendously challenging to demonstrate in practice, due to the slow growth of the separation between classical and quantum run-times \footnote{They also have some other caveats: a provable speedup of $\mathcal{O}(1)$ quantum complexity over $\mathcal{O}(n)$ classical complexity is promising, but just reading the input may require $\mathcal{O}(n)$ time, hiding the computational speedup in practice.}.
Accordingly, the most attractive algorithms for demonstrating advantage tend to be those with a super-polynomial speed-up, including Abelian hidden subgroup problems such as factoring and discrete logarithms \cite{shor_polynomial-time_1997}.
The challenge is that for all known protocols of this type, the quantum circuits required to demonstrate advantage are well beyond the capabilities of near-term experiments.

The final class of protocols demonstrates quantum advantage through an \emph{interactive proof}~\cite{brakerski_cryptographic_2019, brakerski_simpler_2020, aharonov2017interactive, watrous_pspace_1999, kitaev2000parallelization, kobayashi2003quantum, fitzsimons2015multiprover, markov2018quantum}.
At a high level, this type of protocol involves multiple rounds of communication between the classical verifier and the quantum prover; the prover must give self-consistent responses despite not knowing what the verifier will ask next.
This requirement of self-consistency rules out a broad range of classical cheating strategies and can imbue  ``hardness'' into questions that would otherwise be easy to answer.
To this end, interactive protocols expand the space of computational problems that can be used to demonstrate quantum advantage; from a more pragmatic perspective, this can enable the realization of efficiently verifiable quantum advantage on near-term quantum hardware.

	\begin{figure*}
		\begin{center}
		\includegraphics{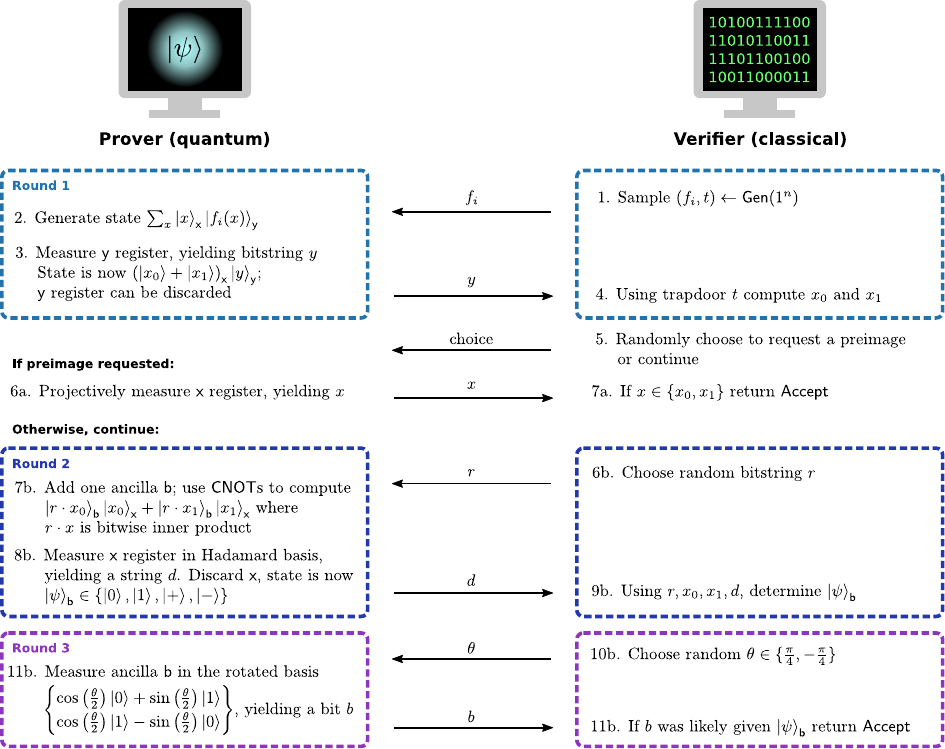}
		\end{center}
		\caption{\label{fig:main-protocol} Schematic representation of the interactive quantum advantage protocol. In the first round of interaction, the classical verifier (right) selects a specific function from a trapdoor claw-free family and the quantum prover (left) evaluates it over a superposition of inputs. 
		 The goal of the second round is to condense the information contained in the prover's superposition state onto a single ancilla qubit. 
		The final round of interaction effectively performs a Bell inequality measurement, whose outcome is cryptographically protected.
		}
	\end{figure*}

    Recently, a beautiful interactive protocol was introduced that can operate both as a test for quantum advantage and as a generator of certifiable quantum randomness \cite{brakerski_cryptographic_2019}.
	The core of the protocol is a two-to-one function, $f$, built on the computational problem known as learning with errors (LWE) \cite{regev_lattices_2005}.
	 The demonstration of advantage leverages two important properties of the function: first, it is  \emph{claw-free}, meaning that it is computationally hard to find a pair of inputs $(x_0, x_1)$ such that $f(x_0) = f(x_1)$.
	\footnote{``Claw-free'' is often used to refer to a \emph{pair} of functions $f_0, f_1$ such that for appropriate $x_0, x_1$ we have $f_0(x_0) = f_1(x_1)$. Here, we use the slightly more general idea of a single 2-to-1 function $f$ for which it is hard to find $x_0, x_1$ such that $f(x_0) = f(x_1)$. This is a special case of a ``collision-resistant function,'' which could potentially be many-to-one. We also note that a claw-free pair of functions can be converted into a single claw-free function by defining $f(b || x) = f_b(x)$, where $||$ denotes concatenation.}.
	Second, there exists a \emph{trapdoor}: given some secret data $t$, it becomes possible to efficiently invert $f$ and reveal the pair of inputs mapping to any output.
	(See supplemental information~\cite{SM} for an overview of trapdoor claw-free functions).
	However, to fully protect against cheating provers, the protocol requires a stronger version of the claw-free property called the \emph{adaptive hardcore bit}, namely, that for any input $x_0$ (which may be chosen by the prover), it is computationally hard to find even a single bit of information about $x_1$~\footnote{To be precise, it is hard to find both $x_0$ and the parity of any subset of the bits of $x_1$.}.
	The need for an adaptive hardcore bit within this protocol severely restricts the class of functions that can operate as verifiable tests of quantum advantage.

	Here, we propose and analyze a novel interactive quantum advantage protocol that removes the need for an adaptive hardcore bit, with essentially zero overhead in the quantum circuit and no extra cryptographic assumptions.
	We present four main results.
	First, we demonstrate how an idea from tests of Bell's inequality can serve the same cryptographic purpose as the adaptive hardcore bit \cite{bell_einstein_1964}.
	In essence, our interactive  protocol is a variant of the CHSH (Clauser,  Horne,  Shimony, Holt) game \cite{clauser_proposed_1969} in which one player is replaced by a cryptographic construction.
	Normally, in CHSH, two quantum parties are asked to produce correlations that would be impossible for classical devices to produce.
	If space-like separation is enforced to rule out communication between the two parties, then the correlations constitute a proof of quantumness.
	In our case, the space-like separation is replaced by the computational hardness of a cryptographic problem.
	In particular, the quantum prover holds a qubit whose state depends on the cryptographic secret in  the same way that the state of one CHSH player's qubit depends on the secret measurement basis of the other player.
	An alternative interpretation, from the perspective of Bell's theorem, is that the protocol can be thought of as a ``single-detector Bell test''---the cryptographic task generates the same single-qubit state as would be produced by entangling a second qubit and measuring it with another detector.
	As in the CHSH game, a quantum device can pass the verifier's test with probability $\sim85\%$, but a classical device can only succeed with probability at most $75\%$.
	This finite gap in success probabilities is precisely what enables a verifiable test of quantum advantage.

	Second, by removing the need for an adaptive hardcore bit, our protocol accepts a broader landscape of functions for interactive tests of quantum advantage (see Table~\ref{tab:tcfs}~and Methods).
	We populate this list with two new constructions.
	The first is based on the decisional Diffie-Hellman problem (DDH) \cite{diffie_new_1976,peikert_lossy_2008,freeman_more_2010}; the second utilizes the function $f_N(x) = x^2 \bmod N$ with $N$ the product of two primes, which forms the backbone of the Rabin cryptosystem \cite{rabin_digitalized_1979, goldwasser_digital_1988}.
	On the one hand, DDH is appealing because the elliptic-curve version of the problem is particularly hard for classical computers~\cite{miller_use_1986, koblitz_elliptic_1987, barker_recommendation_2016}.
	On the other hand,  $x^2 \bmod N$ can be implemented significantly more efficiently, while its hardness is equivalent to factoring.
	We hope that these two constructions will provide a foundation for the search for more TCFs with desirable properties (small key size and efficient quantum circuits).

	Third, we describe two innovations that facilitate our protocol's use in practice: a way to significantly reduce overhead arising from the reversibility requirement of quantum circuits, and a scheme for increasing noisy devices' probability of passing the test.
	Normally, quantum implementations of classical functions like the TCFs used in this protocol have some overhead, due to the need to make the circuit reversible in order to be consistent with unitarity~\cite{bennett_timespace_1989, levine_note_1990, aharonov1998quantum, babu2004synthesis, kotiyal2014circuit}.
	Our protocol exhibits the surprising property that it permits a measurement scheme to discard so-called ``garbage bits'' that arise during the computation, allowing classical circuits to be converted into quantum ones with essentially zero overhead.
	In the case of a noisy quantum device, the protocol also enables an inherent post-selection scheme for detecting and removing certain types of quantum errors.
	With this scheme it is possible for quantum devices to trade off low quantum fidelities for an increase in the overall runtime, while still passing the cryptographic test.
	We note that these constructions are likely applicable to other TCF-based quantum cryptography protocols as well, and thus may be of independent interest for tasks such as certifiable quantum random number generation.

	Finally, focusing on the TCF $x^2 \bmod N$, we provide explicit quantum circuits---both asymptotically optimal (requiring only $\mathcal{O}(n \log n)$ gates and $\mathcal{O}(n)$ qubits), as well as those aimed for near-term quantum devices.
	We show that a verifiable test of quantum advantage can be achieved with $\sim 10^3$ qubits and a gate depth $\sim 10^5$ (see Methods).
	We also co-design a specific implementation of $x^2 \bmod N$ optimized for a programmable Rydberg-based quantum computing platform.
	The native physical interaction corresponding to the Rydberg blockade mechanism enables the direct implementation of multi-qubit-controlled arbitrary phase rotations without the need to decompose such gates into universal two-qubit operations \cite{saffman_quantum_2016,levine_parallel_2019,graham_rydberg-mediated_2019,madjarov_high-fidelity_2020,browaeys2020many}.
	Access to such a native gate immediately reduces the gate depth for achieving quantum advantage by an order of magnitude.

	    \renewcommand{\arraystretch}{1.25}
        \begin{table}
		\begin{tabular}{|c|c|c|c|c|}
			\hline
			Problem & \thead{Trap\\door} & \thead{Claw-\\free} & \thead{Adaptive\\hard-core\\bit} & \thead{Asymptotic\\complexity\\ (gate count)} \\
			\hline
			\hline
			LWE \cite{brakerski_cryptographic_2019}
			& \cmark & \cmark & \cmark & $n^2 \log^2 n$ \\
			\hline
			$x^2 \bmod N$
			& \cmark & \cmark & \xmark & $n \log n$ \\
			\hline
			Ring-LWE
			\cite{brakerski_simpler_2020}
			& \cmark & \cmark & \xmark & $n \log^2 n$ \\
			\hline
			Diffie-Hellman
			& \cmark & \cmark & \xmark & $n^3 \log^2 n$ \\
			\hline
			\hline
			Shor's alg.
			& \hspace{1em} --- \hspace{1em} & \hspace{1em} --- \hspace{1em} & \hspace{1em} --- \hspace{1em} & $n^2 \log n$ \\
			\hline
		\end{tabular}

		\caption{\label{tab:tcfs}Cryptographic constructions for interactive quantum advantage protocols.
		$n$ represents the number of bits in the function's input string. 
		Big-$\mathcal{O}$ notation is implied and factors of $\log \log n$ and smaller are dropped.
		For references and derivations of the circuit complexities, see supplementary information~\cite{SM}.
		}
	\end{table}

	\section{Background and Related work}
	\label{sec:related}
	
	The use of trapdoor claw-free functions for quantum cryptographic tasks was pioneered in two recent breakthrough protocols: (i) giving classical homomorphic encryption for quantum circuits~\cite{mahadev_classical_homomorphic_2018} and (ii) for generating cryptographically certifiable quantum randomness from an untrusted black-box device~\cite{brakerski_cryptographic_2019}; this latter work  also introduced the notion of an adaptive hardcore bit and serves as an efficiently verifiable test of quantum advantage.
	Remarkably, the scheme was further extended to allow a classical server to cryptographically verify the correctness of arbitrary quantum computations~\cite{mahadev_classical_2018}; it has also been applied to  remote state preparation with implications for secure delegated computation~\cite{gheorghiu_computationally-secure_2019}.
	
	Recently, an improvement to the practicality of TCF-based proofs of quantumness was provided in the \emph{random oracle model} (ROM)---a model of computation in which both the quantum prover and classical verifier can query a third-party ``oracle,'' which returns a random (but consistent) output for each input.
	In that work, the authors provide a protocol that both removes the need for the adaptive hardcore bit, and also reduces the interaction to a single round \cite{brakerski_simpler_2020}.
	Because the security of the protocol is proven in the ROM, implementing this protocol in practice requires applying the \emph{random oracle heuristic}, in which the random oracle is replaced by a cryptographic hash function, but the hardness of classically defeating the protocol is taken to still hold
	\footnote{Replacing the random oracle with a hash function is termed a \emph{heuristic} rather than an \emph{assumption} because the security of this procedure generally holds in practice but is not provable---in fact, there exist constructions that are provably secure in the random oracle model but trivially insecure when instantiated with a hash function \cite{canetti_random_1998}.}.
	Only contrived cryptographic schemes have ever been broken by attacking the random oracle heuristic \cite{canetti_random_1998,koblitz_random_2015}, so it seems to be effective in practice and the ROM protocol has significant potential for use as a practical tool for benchmarking untrusted quantum servers.
	On the other hand, for a robust experimental test of the foundational complexity theoretic claims of quantum computing---that quantum mechanics allows for algorithms that are superpolynomially faster than classical Turing machines---we desire the complexity-theoretic backing of the speedup to be as strong as possible (i.e.~provable in the ``standard model'' of computation~\cite{aaronson_complexity-theoretic_2016}), which is the goal pursued in the present work.
	With that said, we emphasize that the various optimizations described below---including the TCF families based on DDH and $x^2 \bmod N$, as well as the schemes for postselection and discarding garbage bits---can be applied to the ROM protocol as well.
		
	Lastly, we also note two recent works which demonstrate that any TCF-based proof of quantumness, including the present work, can be implemented in constant quantum circuit depth (at the cost of more qubits) \cite{liu_depth-efficient_2021, hirahara_test_2021}.

	\section{Interactive Protocol for Quantum Advantage}
	\label{sec:protocol}

	Our full protocol is shown diagrammatically in Figure \ref{fig:main-protocol}.
	It consists of three rounds of interaction between the prover and verifier (with a ``round'' being a challenge from the verifier, followed by a response from the prover).
	The first round generates a multi-qubit superposition over two bit strings that would be cryptographically hard to compute classically.
	The second round maps this superposition onto the state of one ancilla qubit, retaining enough information to ensure that the resulting single-qubit state is also hard to compute classically.
	The third round takes this single qubit as input to a CHSH-type measurement, allowing the prover to generate a bit of data that is correlated with the cryptographic secret in a way that would not be possible classically.
	Having described the intuition behind the protocol, we now lay out each round in detail.

	\subsection{Description of the protocol}

	The goal of the first round is to generate a superposition over two colliding inputs to the trapdoor claw-free function (TCF).
	It begins with the verifier choosing an instance $f_i$ of the TCF along with the associated trapdoor data $t$; $f_i$ is sent to the prover.
	As an example, in the case of $x^2 \mod N$, the ``index'' $i$ is the modulus $N$, and the trapdoor data is its factorization, $p,q$.
	The prover now initializes two registers of qubits, which we denote as the $\mathsf{x}$ and $\mathsf{y}$ registers.
	On these registers, they compute the entangled superposition $\ket{\psi} = \sum_x \ket{x}_\mathsf{x} \ket{f_i(x)}_\mathsf{y}$, over all $x$ in the domain of $f_i$.
	The prover then measures the $\mathsf{y}$ register in the standard basis, collapsing the state to $\left(\ket{x_0} + \ket{x_1} \right)_\mathsf{x} \ket{y}_\mathsf{y}$, with $y = f(x_0) = f(x_1)$.
	The  measured bitstring $y$ is then sent to the verifier, who uses the secret trapdoor to compute $x_0$ and $x_1$ in full.

	At this point, the verifier randomly chooses to either request a projective measurement of the $\mathsf{x}$ register, ending the protocol, or to continue with the second and third rounds.
	In the former case, the prover communicates the result of that measurement, yielding either $x_0$ or $x_1$, and the verifier checks that indeed $f(x)=y$.
	In the latter case, the protocol proceeds with the final two rounds.

	The second round of interaction converts the many-qubit superposition $\ket{\psi} = \ket{x_0}_\mathsf{x} + \ket{x_1}_\mathsf{x}$ into a single-qubit state $\{\ket{0}_\mathsf{b}, \ket{1}_\mathsf{b}, \ket{+}_\mathsf{b}, \ket{-}_\mathsf{b}\}$ on an ancilla qubit $\mathsf{b}$.
	The final state of $\mathsf{b}$ depends on the values of both $x_0$ and $x_1$.
	The round begins with the verifier choosing a random bitstring $r$ of the same length as $x_0$ and $x_1$, and sending it to the prover.
	Using a series of CNOT gates from the $\mathsf{x}$ register to $\mathsf{b}$, the prover  computes the state $\ket{r \cdot x_0}_\mathsf{b} \ket{x_0}_\mathsf{x} + \ket{r \cdot x_1}_\mathsf{b} \ket{x_1}_\mathsf{x}$, where $r \cdot x$ denotes the binary inner product.
	Finally, the prover measures the $\mathsf{x}$ register in the Hadamard basis, storing the result as a bitstring $d$ which is sent to the verifier.
	This measurement disentangles $\mathsf{x}$ from $\mathsf{b}$ without collapsing $\mathsf{b}$'s superposition.
	At the end of the second round, the prover's state is $(-1)^{d \cdot x_0} \ket{r \cdot x_0}_\mathsf{b} + (-1)^{d \cdot x_1} \ket{r \cdot x_1}_\mathsf{b}$, which is one of $\{\ket{0}, \ket{1}, \ket{+}, \ket{-}\}$.
	Crucially, it is cryptographically hard to predict whether this state is one of $\{\ket{0}, \ket{1}\}$ or $\{\ket{+}, \ket{-}\}$.

	The final round of our protocol can be understood in analogy to the CHSH game \cite{clauser_proposed_1969}.
	While the prover cannot extract the polarization axis from their single qubit (echoing the no-signaling property of  CHSH), they can make a measurement that is \emph{correlated} with it.
	This measurement outcome ultimately constitutes the proof of quantumness.
	In particular, the verifier requests a measurement in an intermediate basis, rotated from the $Z$ axis around $Y$, by either $\theta=\pi/4$ or $-\pi/4$.
	Because the measurement basis is never perpendicular to the state, there will always be one outcome that is more likely than the other (specifically, with probability $\cos^2 (\pi/8) \approx 0.85$).
	The verifier returns $\mathsf{Accept}$ if this ``more likely'' outcome is the one received.

	In the next section, we prove that a quantum device can cause the verifier to $\mathsf{Accept}$ with substantially higher probability than any classical prover.
	A full test of quantum advantage would consist of running the protocol many times, until it can be established with high statistical confidence that the device has exceeded the classical probability bound.

\subsection{Completeness and soundness}

We now prove completeness (the noise-free quantum success probability) and soundness (an upper bound on the classical success probability).
Recall that after the first round of the protocol, the verifier chooses to either request a standard basis measurement of the first register, or to continue with the second and third rounds.
In the proofs below, we analyze the prover's success probability across these two cases separately.
We denote the probability that the verifier will accept the prover's string $x$ in the first case as $p_x$, and the probability that the verifier will accept the single-qubit measurement result in the second case as $p_{\mathrm{CHSH}}$.

    \subsubsection{Perfect quantum prover (completeness)}
	\label{subsec:honest-quantum}

	\begin{thm}
	\label{thm:honest-quantum}
		An error-free quantum device honestly following the interactive protocol  will cause the verifier to return $\mathsf{Accept}$ with $p_x = 1$ and $p_\mathrm{CHSH} = \cos^2 \left(\pi/8 \right) \approx 0.85$.
	\end{thm}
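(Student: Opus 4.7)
The plan is to handle the two branches of the protocol separately, corresponding to the verifier's random choice after round one: either a standard-basis measurement of the $\mathsf{x}$ register (governed by $p_x$), or continuation through rounds two and three (governed by $p_m$).

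For $p_x = 1$, the argument is essentially a restatement of the construction. After the first round, the prover's state is $(\ket{x_0}+\ket{x_1})_\mathsf{x}\ket{y}_\mathsf{y}$ with $f(x_0)=f(x_1)=y$ by the two-to-one property of the TCF. A standard-basis measurement of $\mathsf{x}$ yields $x_0$ or $x_1$, and each passes the verifier's check $f(x)=y$ with certainty.

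For $p_m$, the main task is to rigorously confirm the claim, already sketched in the protocol description, that after round two the ancilla $\mathsf{b}$ is in one of the four states $\{\ket{0},\ket{1},\ket{+},\ket{-}\}$, deterministically known to the verifier from $(r,d,x_0,x_1)$. I would split into two sub-cases: if $r\cdot x_0 = r\cdot x_1$, the CNOTs send $\mathsf{b}$ into the same computational basis state on both branches, and the Hadamard-basis measurement on $\mathsf{x}$ contributes relative signs $(-1)^{d\cdot x_0}$ and $(-1)^{d\cdot x_1}$ that conspire to leave $\mathsf{b}$ in $\ket{0}$ or $\ket{1}$ (with the ``wrong parity'' outcomes $d$ ruled out by destructive interference); if $r\cdot x_0 \neq r\cdot x_1$, the same analysis places $\mathsf{b}$ in $\ket{+}$ or $\ket{-}$. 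Since the verifier holds the trapdoor, it recovers $x_0,x_1$ and hence can compute which of the four states $\mathsf{b}$ is in for any received $(r,d)$.

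The final step is a direct Bloch-sphere computation. The four states $\{\ket{0},\ket{1},\ket{+},\ket{-}\}$ lie on the $XZ$ equator, and each of the two measurement axes (at angles $\pm\pi/4$ rotated from $Z$ about $Y$) sits at Bloch angle $\pi/4$ from exactly one of the four states (and $3\pi/4$ from its antipode), so the Born rule gives $\cos^2(\pi/8)$ for the ``closer'' outcome in every case. Because the verifier can predict this closer outcome deterministically from $(r,d,x_0,x_1,m)$ and accepts on it, $p_m = \cos^2(\pi/8)$. The only nontrivial bookkeeping I anticipate is tracking the signs through the Hadamard measurement on the many-qubit $\mathsf{x}$ register when $r\cdot x_0 = r\cdot x_1$, since there one must verify that the destructive interference pattern is consistent with $\mathsf{b}$ being in a pure computational basis state; the CHSH-style geometry itself is standard and contributes no real obstacle.
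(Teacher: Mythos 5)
Your proposal is correct and follows essentially the same route as the paper's proof, which simply asserts $p_x = 1$ ``by inspection'' and notes that each of the four possible single-qubit states yields one outcome with probability $\cos^2(\pi/8)$ in either measurement basis. The additional detail you supply---the interference analysis confirming that $\mathsf{b}$ lands in one of $\{\ket{0},\ket{1},\ket{+},\ket{-}\}$ determined by $(r,d,x_0,x_1)$, and the Bloch-sphere computation of the Born probabilities---is a correct unpacking of what the paper treats as immediate.
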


	\begin{proof}
		If the verifier chooses to request a projective measurement of $\mathsf{x}$ after the first round, an honest quantum prover succeeds with probability $p_x = 1$ by inspection.

		If the verifier chooses to instead perform the rest of the protocol, the prover will hold one of $\{\ket{0}, \ket{1}, \ket{+}, \ket{-}\}$ after round 2.
		In either measurement basis the verifier may request in round 3, there will be one outcome that occurs with probability $\cos^2 \left(\pi / 8 \right)$, which is by construction the one the verifier accepts.
		Thus, an honest quantum prover has $p_\mathrm{CHSH} = \cos^2(\pi/8) \approx 0.85$.
	\end{proof}

	\subsubsection{Classical prover (soundness)}
	\label{subsec:classical-hardness}

	\begin{restatable}{thm}{classicalhardness}
		\label{thm:classical-success}
		Assume the function family used in the interactive protocol is claw-free. Then, $p_x$ and $p_\mathrm{CHSH}$ for any classical prover must obey the relation
		\begin{equation}
		\label{eq:classical-bound}
		p_x + 4p_\mathrm{CHSH} - 4 < \epsilon(n)
		\end{equation}
		where $\epsilon$ is a negligible function of $n$, the length of the function family's input strings.
	\end{restatable}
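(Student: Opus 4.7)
The plan is to reduce the theorem to the claw-freeness assumption by combining a Bell-inequality-style analysis of the CHSH round with Goldreich--Levin list-decoding. Arguing by contrapositive, I would assume a classical prover $\mathcal{A}$ achieves $p_x + 4 p_m - 4 \geq \epsilon$ for some non-negligible $\epsilon$, and construct an efficient adversary $\mathcal{B}$ that finds a claw with non-negligible probability, contradicting the assumed claw-freeness.

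The first step is a Bell-type bound for the CHSH round. Fix a TCF instance with output $y$, claw $(x_0, x_1)$, and set $u := x_0 \oplus x_1$. A deterministic classical strategy for rounds~2 and~3 is specified by functions $d = D(r)$ and $v = V(r, m_b)$, where $m_b \in \{0,1\}$ indexes the two measurement angles $\pm\pi/4$. Working out the ``more-likely outcome'' for each of the four target states $\{\ket{0}, \ket{1}, \ket{+}, \ket{-}\}$ produced by the honest quantum prover, the verifier accepts iff $v = r \cdot x_0$ when $\tau := r \cdot u = 0$, and iff $v = (d \cdot u) \oplus m_b$ when $\tau = 1$. Setting $\alpha(r) := V(r,0) \oplus V(r,1)$ and doing a four-way case analysis on $(\tau, \alpha)$, I would establish $\Pr_{m_b}[\text{accept} \mid r] \leq \tfrac{1}{2} + \tfrac{1}{2}\mathbf{1}[\alpha(r) = \tau]$, which on averaging over $r$ yields the classical analog of CHSH's $3/4$ bound, $p_m \leq \tfrac{1}{2} + \tfrac{1}{2} q$, with $q := \Pr_r[\alpha(r) = r \cdot u]$. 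Hence $q \geq 2p_m - 1$: the prover's strategy must encode, inside $\alpha$, a noticeable correlation with the Hadamard codeword $r \mapsto r \cdot u$.

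Next I would construct $\mathcal{B}$ as follows. (i)~Simulate $\mathcal{A}$ through round~1 to obtain $y$. (ii)~Rewind and ask the preimage branch, yielding a candidate $x$. (iii)~Rewind again and treat $\alpha$ as an oracle---each query evaluated by running the CHSH branch twice, once for $m_b=0$ and once for $m_b=1$---then invoke the Goldreich--Levin list decoder with advantage parameter tuned to $\epsilon$. (iv)~For every candidate $u_i$ returned, output $(x, x \oplus u_i)$ and check whether $f(x \oplus u_i) = y$. Each step runs in $\mathrm{poly}(n, 1/\epsilon)$ time, matching the requirement that a claw-finder be efficient.

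Finally I would bound $\mathcal{B}$'s success probability. Let $\tilde p_x(\lambda, \omega)$ and $\tilde p_m(\lambda, \omega)$ denote the per-instance success probabilities over a TCF key $\lambda$ and prover randomness $\omega$. A reverse-Markov argument applied to $\tilde p_x + 4 \tilde p_m \in [0,5]$ shows that, with probability $\Omega(\epsilon)$ over $(\lambda, \omega)$, \emph{both} $\tilde p_x \geq \Omega(\epsilon)$ and $\tilde p_m \geq 3/4 + \Omega(\epsilon)$, hence $\tilde q \geq 1/2 + \Omega(\epsilon)$; on these good instances Goldreich--Levin returns a list of size $O(1/\epsilon^2)$ containing the true $u$ with constant probability, and $(x, x \oplus u)$ is a genuine claw whenever the committed $x$ was a valid preimage. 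This yields a claw with probability $\Omega(\epsilon^2)$, contradicting claw-freeness unless $\epsilon$ is negligible in $n$. I expect the main obstacle to be precisely this last bookkeeping step: the asymmetric ``$p_x + 4p_m$'' combination in the theorem statement is tuned so that the two sources of prover success---committing to a correct preimage, and carrying CHSH-style correlations---can be \emph{simultaneously} harnessed by $\mathcal{B}$, and the reverse-Markov step is what turns the average-case inequality into an instance-by-instance extraction guarantee.
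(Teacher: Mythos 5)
Your proposal is correct and follows essentially the same route as the paper's proof: a rewinding reduction in which the adversary extracts a preimage from the first branch, treats the two CHSH measurement responses as a noisy oracle for the Hadamard encoding of the claw, applies Goldreich--Levin list decoding, and uses an averaging argument to convert the average-case advantage into a per-instance extraction guarantee. The only differences are presentational---you decode $u = x_0 \oplus x_1$ via an explicit case analysis of $\alpha(r) = V(r,0)\oplus V(r,1)$ where the paper decodes $x_1$ directly and bounds the oracle accuracy by a union bound over the two basis queries (both give accuracy $2p_m - 1$), and your joint reverse-Markov step plays the role of the paper's ``Lemma~1'' plus final union bound.
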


	\begin{proof}

	We prove by contradiction. Assume  that there exists a classical machine $\mathcal{A}$ for which $p_x + 4p_\mathrm{CHSH} - 4 \geq \mu(n)$, for a non-negligible function $\mu$. We show that there exists another algorithm $\mathcal{B}$ that uses $\mathcal{A}$ as a subroutine to find a pair of colliding inputs to the claw-free function, a contradiction.

		Given a claw-free function instance $f_i$, $\mathcal{B}$ acts as a simulated verifier for $\mathcal{A}$.
		$\mathcal{B}$ begins by supplying $f_i$ to $\mathcal{A}$, after which $\mathcal{A}$ returns a value $y$, completing the first round of interaction.
		$\mathcal{B}$ now chooses to request the projective measurement of the $x$ register, and stores the result as $x_0$.
		Letting $p_{x_0}$ be the probability that $x_0$ is a valid preimage, by definition of $p_x$ we have $p_{x_0} = p_x$.

		Next, $\mathcal{B}$ \emph{rewinds} the execution of $\mathcal{A}$, to its state before $x_0$ was requested.
		Crucially, rewinding is possible because $\mathcal{A}$ is a classical algorithm.
		$\mathcal{B}$ now proceeds by running $\mathcal{A}$ through the second and third rounds of the protocol for many different values of the bitstring $r$ (Fig.~1),  rewinding each time.

		We now show that, for  $r$ selected uniformly at random,  $\mathcal{B}$ can extract the value of the inner product $r\cdot x_1$ with probability $p_{r \cdot x_1} \geq 1 - 2(1-p_\mathrm{CHSH})$.
		$\mathcal{B}$ begins by sending $r$ to $\mathcal{A}$, and receiving the bitstring $d$.
		$\mathcal{B}$ then requests the measurement result in both the  $\theta =\pi/4$ and $\theta = -\pi/4$ bases, by rewinding in between.
		Supposing that both the received values are ``correct'' (i.e.~would be accepted by the real verifier), they uniquely determine the single-qubit state $\ket{\psi} \in \{\ket{0}, \ket{1}, \ket{+}, \ket{-}\}$ that would be held by an honest quantum prover.
		This state reveals whether $r\cdot x_0 = r\cdot x_1$, and because $\mathcal{B}$ already holds $x_0$,  $\mathcal{B}$ can compute $r\cdot x_1$.
		We may define the probability (taken over all randomness except the choice of $\theta$) that the prover returns an accepting value in the cases $\theta=\pi/4$ and $\theta=-\pi/4$ as $p_{\pi/4}$ and $p_{-\pi/4}$ respectively.
		Then, via union bound, the probability that both are indeed correct is $p_{r \cdot x_1} \geq 1 - (1-p_{\pi/4}) - (1-p_{-\pi/4})$.
		Considering that $p_{\mathrm{CHSH}} = (p_{\pi/4} + p_{-\pi/4})/2$, we have $p_{r \cdot x_1} \geq 1 - 2(1-p_\mathrm{CHSH})$.

		Now, we show that extracting $r\cdot x_1$ in this way allows $x_1$ to be determined in full even in the presence of noise, by rewinding many times and querying for specific (correlated) choices of $r$.
		In particular, the above construction is a noisy oracle to the encoding of $x_1$ under the Hadamard code.
		By the Goldreich-Levin theorem \cite{goldreich_hard-core_1989}, list decoding applied to such an oracle will generate a polynomial-length list of candidates for $x_1$.
		If the noise rate of the oracle is noticeably less than $1/2$, $x_1$ will be contained in that list; $\mathcal{B}$ can iterate through the candidates until it finds one for which $f(x_1) = y$.

		By Lemma~1 in the Methods, for a particular iteration of the protocol, the probability that list decoding succeeds is bounded by $p_{x_1} > 2p_{r\cdot x_1} - 1 - 2\mu'(n)$, for a noticeable function $\mu'(n)$ of our choice \footnote{The oracle's noise rate is \emph{not} simply $p_{r \cdot x_1}$: that is the probability that any single value $r \cdot x_1$ is correct, but all of the queries to the oracle are correlated (they are for the same iteration of the protocol, and thus the same value of $y$).}.
		Setting $\mu'(n) = \mu(n)/4$ and combining with the previous result yields $p_{x_1} > 1 - 4(1-p_\mathrm{CHSH}) - \mu(n)/2$.

		Finally, via union bound, the probability that $\mathcal{B}$ returns a claw is
		\[{P}_\mathcal{B} \geq 1 - (1 - p_{x_0}) - (1-p_{x_1}) > p_x + 4 p_\mathrm{CHSH} - 4 - \mu(n)/2 \]
		and via the assumption that $p_x + 4 p_\mathrm{CHSH} - 4 > \mu(n)$ we have
		\[{P}_\mathcal{B} > \mu(n)/2 \]
		a contradiction.
	\end{proof}

	If we let $p_x = 1$, the bound requires that $p_\mathrm{CHSH} < 3/4 + \epsilon(n)$ for a classical device, while $p_\mathrm{CHSH} \approx 0.85$ for a quantum device, matching the classical and quantum success probabilities of the CHSH game.
	In the supplementary information~\cite{SM}, we provide an example of a classical algorithm saturating the bound with $p_x = 1$ and $p_\mathrm{CHSH} = 3/4$.

	\subsection{Robustness:~Error mitigation via postselection}
	\label{subsec:postselection}

	The existence of a finite gap between the classical and quantum success probabilities implies that our protocol can tolerate a certain amount of noise.
	A direct implementation of our interactive protocol on a noisy quantum device would require an overall fidelity of  $\sim 83$\% in order to exceed the classical bound~\footnote{This number comes from solving the classical bound (Equation~\ref{eq:classical-bound}) for circuit fidelity $\mathcal{F}$, with $p_x = \mathcal{F}$ and $p_\mathrm{CHSH} = \nicefrac{1}{2} + \mathcal{F}/2$.}.
	To allow devices with lower fidelities to demonstrate quantum advantage, our protocol allows for a natural tradeoff between fidelity and runtime, such that the classical bound can, in principle, be exceeded with only a small [e.g. $1/\mathrm{poly}(n)$] amount of coherence in the quantum device \footnote{This is true even if the coherence is exponentially small in $n$. Of course, with arbitrarily low coherence the runtime may become excessively large such that quantum advantage cannot be demonstrated---the point is that regardless of runtime, the classical probability bound can be exceeded with a device that has arbitrarily low circuit fidelity.}.

\begin{figure}
	\centering
	\includegraphics[width=0.85\linewidth]{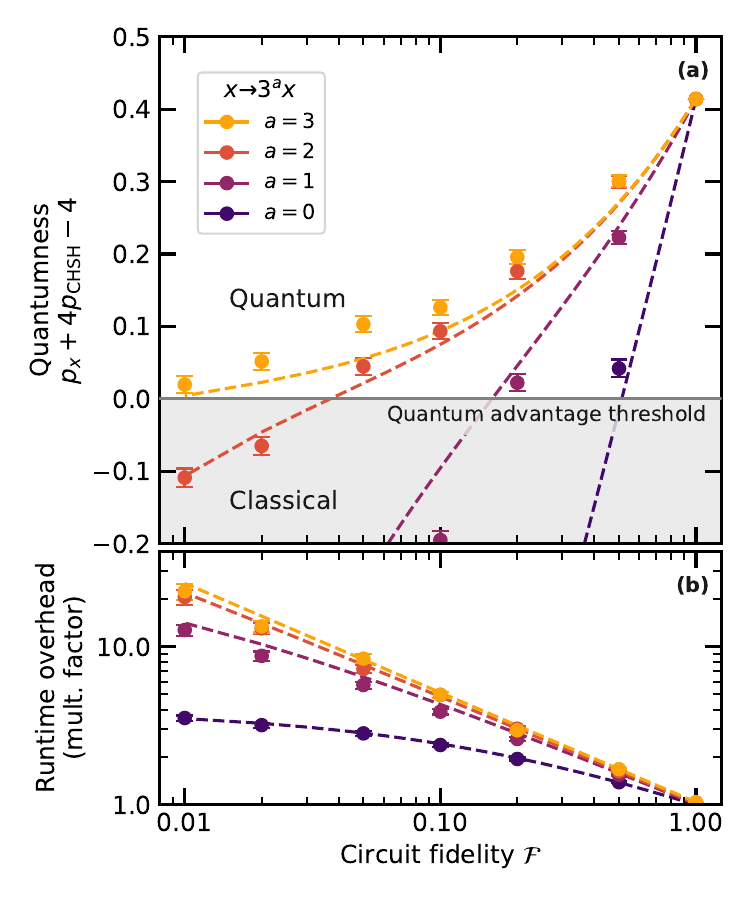}
	\caption{\label{fig:postselection} Performance of our post-selection scheme when redundancy is added to the function $x^2 \bmod N$ by mapping it to $(3^a x)^2 \bmod 3^{2a}N$. 
	Numerical simulations are performed on a quantum circuit implementing the Karatsuba algorithm for $a=\{{0,1,2,3}\}$ (see supplemental information~\cite{SM}). {\bf (a)}  
	``Quantumness'' measured in terms of the classical bound from Eqn.~\ref{eq:classical-bound} as a function of the total circuit fidelity. 
	 With $a=3$, even a quantum device with only 1\% circuit fidelity can demonstrate quantum advantage. \textbf{(b)} Depicts the increased runtime associated the post-selection scheme, which arises from a combination of slightly larger circuit sizes and the need to re-run the circuit multiple times.
	  The latter is by far the dominant effect.
	 Dashed  lines are a theory prediction with no fit parameters; points are the result of numerical simulations at $n=512$ bits and error bars depict $2\sigma$ uncertainty. 
	  }
	\end{figure}

	The key idea is based upon postselection.
	For most TCFs, there are many bitstrings of the correct length that are not valid outputs of $f$.
	Thus, if the prover detects such a  $y$ value in step 3 (Fig.~1), they can simply discard it and try again
	\footnote{This scheme will only remove errors in the first round of the protocol, but fortunately, one expects the overwhelming majority of the quantum computation, and thus also the majority of errors, to occur in that round.}.
	In principle, the verifier can even use their trapdoor data to silently detect and discard iterations of the protocol with invalid $y$ \footnote{This procedure does not leak data to a classical cheater, because the verifier does not communicate which runs were discarded. Furthermore, it does not affect the soundness of Theorem~\ref{thm:classical-success}, because the machine $\mathcal{B}$ in that theorem's proof can simply iterate until it encounters a valid $y$.}.
	Since $y$ is a function of $x_0$ and $x_1$, one might hope that this postselection scheme also rejects states where $x_0$ or  $x_1$ has become corrupt.
	Although this may not always be the case, we demonstrate numerically that this assumption holds for a specific implementation of $x^2 \bmod N$ in the following subsection.
	 One could also compute a classical checksum of $x_0$ and $x_1$ before and after the main circuit to ensure that they have not changed during its execution.
	Assuming that such bit-flip errors are indeed rejected, the possibility remains of an error in the phase between $\ket{x_0}$ and $\ket{x_1}$.
	In the supplementary information~\cite{SM}, we demonstrate that a prover holding the correct bitstrings but with an error in the phase can still saturate the classical bound; if the prover can avoid phase errors even a small fraction of the time, they will push past the classical threshold.

	\subsubsection{Numerical analysis of the postselection scheme for \texorpdfstring{$x^2 \bmod N$}{x² mod N}}
	\label{subsec:numerical-postselection}

	Focusing on the function $f(x) = x^2 \bmod N$, we now explicitly analyze the effectiveness of the postselection scheme.
	Let $m$ be the length of the outputs of this function.
	In this case, approximately $1/4$ of the bitstrings of length $m$ are valid outputs, so one would  naively expect to reject about $3/4$ of  corrupted bitstrings.
	 By introducing additional redundancy into the outputs of $f$ and thus increasing $m$, one can further decrease the probability that a corrupted $y$ will incorrectly be accepted.
	As an example, let us consider mapping $x^2 \bmod N$ to the function $(kx)^2 \bmod k^2 N$ for some integer $k$.
	This is particularly convenient because the prover can validate $y$ by simply checking whether it is a multiple of $k^2$.
	Moreover, the mapping adds only $\log k$ bits to the size of the problem, while rejecting a fraction $1 - 1/k^2$ of corrupted bitstrings.

	We perform extensive numerical simulations demonstrating that postselection allows for quantum advantage to be achieved using noisy devices with low circuit fidelities (Fig.~2).
	 We simulate quantum circuits for $(kx)^2 \bmod k^2 N$ at a problem size of $n=512$ bits.
	 Assuming a uniform gate fidelity across the circuit, we analyze the success rate of a quantum prover for $k = 3^a$ and $a = \{0,1,2,3\}$.
	 For these simulations we use our implementation of the Karatsuba algorithm (see Section~\ref{subsec:near-term-circuits}) because it is the most efficient in terms of gate count and depth. 
	 The choice of $k=3^a$, and details of the simulation, are explained in the supplementary information~\cite{SM}.

	For $a=0$, the circuit implements our original function $x^2 \bmod N$, where in the absence of postselection, an overall circuit fidelity of $\mathcal{F} \sim 0.83$ is required to achieve quantum advantage.
	As depicted in Fig.~\ref{fig:postselection}(a), even for $a=0$, our postselection scheme improves the advantage threshold down to $\mathcal{F} \sim 0.51$.
	For $a=2$,  circuit fidelities with $\mathcal{F} \gtrsim 0.1$ remain well above the quantum advantage threshold, while for $a=3$ the required circuit fidelity drops below $1\%$.

	However, there is a tradeoff. In particular, one expects the overall runtime to increase for two reasons: (i) there will be a slight increase in the circuit size for  $a>0$ and (ii) one may need to re-run the quantum circuit many times until a valid $y$ is measured.
	Somewhat remarkably, a runtime overhead of only $4.7$x already enables
 quantum advantage to be achieved with an overall circuit fidelity of  $10\%$ [Fig.~\ref{fig:postselection}(b)].
	Crucially, this increase in runtime is overwhelmingly due to re-running the quantum circuit and  does not imply the need for longer experimental coherence times.

	\subsection{Efficient quantum evaluation of irreversible classical circuits}
	\label{subsec:garbage}

	The central computational step in our interactive protocol (i.e.~step 2, Fig.~\ref{fig:main-protocol}) is for the prover  to apply  a unitary of the form:
	\begin{equation}
	\label{eq:Uf}
	\mathcal{U}_{f_i} \sum_x \ket{x}_\mathsf{x}\ket{0^{\otimes m}}_\mathsf{y} = \sum_x \ket{x}_\mathsf{x} \ket{f_i(x)}_\mathsf{y},
	\end{equation}
	where $f_i(x)$ is a classical function and $m$ is the length of the output register.
	This type of unitary operation is ubiquitous across quantum algorithms, and a common strategy for its implementation is to  convert the gates of a classical circuit into quantum gates.
	Generically, this process induces substantial overhead in both time and space complexity owing to the need to make the circuit reversible to preserve unitarity \cite{bennett_timespace_1989, levine_note_1990}.
	This reversibility is often achieved by using an additional register, $\mathsf{g}$, of so-called ``garbage bits'' and implementing:
	$\mathcal{U}_{f_i}^\prime \sum_x \ket{x}_\mathsf{x}\ket{0^{\otimes m}}_\mathsf{y}\ket{0^{\otimes l}}_\mathsf{g} = \sum_x \ket{x}_\mathsf{x} \ket{f_i(x)}_\mathsf{y} \ket{g_i(x)}_\mathsf{g}$.
	For each gate in the classical circuit, enough garbage bits are added to make the operation injective.
	In general, to maintain coherence, these bits cannot be discarded but must be ``uncomputed'' later, adding significant complexity to the circuits.

	A particularly appealing feature of our protocol is the existence of a measurement scheme to  discard garbage bits, allowing for the direct mapping of classical to quantum circuits with no overhead.
	Specifically, we envision  the prover measuring the qubits of the $\mathsf{g}$ register in the Hadamard basis and storing the results as a bitstring $h$, yielding the state,
	\begin{equation}
	\label{eq:phase-result}
	\ket{\psi} = \sum_x (-1)^{h \cdot g_i(x)}\ket{x}_\mathsf{x} \ket{f_i(x)}_\mathsf{y}.
	\end{equation}
	The prover has avoided the need to do any uncomputation of the garbage bits, at the expense of introducing phase flips onto some elements of the superposition.
	These phase flips do not affect the protocol, so long as the verifier can determine them.
	While classically computing $h \cdot g_i(x)$ is efficient for any $x$, computing it for \emph{all} terms in the superposition is infeasible for the verifier.
	However, our protocol provides a natural way around this.
	The verifier can wait until the prover has collapsed the superposition onto $x_0$ and $x_1$, before evaluating $g_i(x)$ only on those two inputs~\footnote{This is true because $g_i(x)$ is the result of adding extra output bits to the gates of a classical circuit, which is efficient to evaluate on any input.}.

	Crucially, the prover can measure away garbage qubits as soon as they would be discarded classically, instead of waiting until the computation has completed.
	If these qubits are then reused, the quantum circuit will use no more space than the classical one.
	This feature allows for significant improvements in both gate depth and qubit number for practical implementations of the protocol (see last rows of Table~I in Methods).
	We note that performing many individual measurements on a subset of the qubits is difficult on some experimental systems, which may make this technique challenging to use in practice.
	However, recent hardware advances have demonstrated these ``intermediate measurements'' in practice with high fidelity, for example by spatially shuttling trapped ions \cite{zhu2021demonstration,ryan-anderson_realization_2021}.
	We thus expect that the capability to perform partial measurements will not be a barrier in the near term.
	This issue can also be mitigated somewhat by collecting ancilla qubits and measuring them in batches rather than one-by-one, allowing for a direct trade-off between ancilla usage and the number of partial measurements.

	\section{Quantum circuits for trapdoor claw-free functions}
	\label{sec:implementation}
	
	\begin{figure*}
		\centering
		\includegraphics[width=0.9\textwidth]{./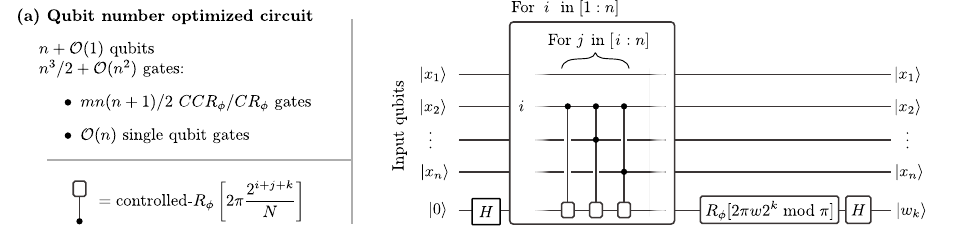}
		\vspace{3mm}
		
		\includegraphics[width=0.9\textwidth]{./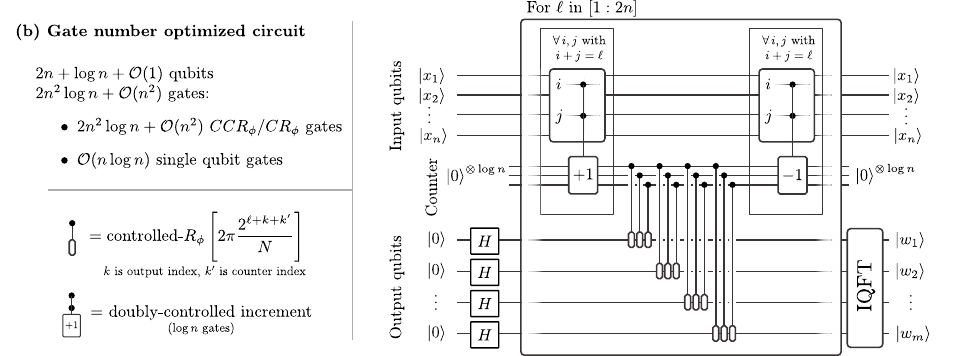}
		\caption{Quantum circuits implementing step 2 of our interactive protocol for $f(x) = x^2 \bmod N$.
		$n$ is the length of the input register, and $m = n + \mathcal{O}(1)$ is the length of the output register.
		{\bf (a)} Depicts a quantum circuit optimized for qubit number. The circuit shown computes the $k^\mathrm{th}$ bit of $w=x^2/N$ and should be iterated for $k$.
		This iteration should begin at the least significant bit to ensure that the final phase rotation can be estimated classically. 
		 Note that the only entangling operations necessary for the circuit are  doubly-controlled gates, which can be natively implemented using the Rydberg blockade (see Section~\ref{subsec:rydberg}).
		  {\bf (b)}~Depicts a quantum circuit optimized for gate number. By combining gates of equal phase, one can reduce the overall circuit complexity to $\mathcal{O}(n^2 \log n)$ gates. 
		  We note that neither circuit requires use of the ``garbage bit'' procedure described in Section~\ref{subsec:garbage}; this design choice reduces measurement complexity.
		  If desired, that procedure could be applied to the counter register of circuit (b) in place of the controlled-decrement operation.
		}
		\label{fig:circuit-1}
	\end{figure*}

While all of the trapdoor, claw-free functions listed in Table~\ref{tab:tcfs} can be utilized within our interactive protocol, each has its own set of advantages and disadvantages. 
For example, the TCF based on the Diffie-Hellman problem (described in the Methods) already enables a demonstration of quantum advantage at a key size of 160 bits (with a hardness equivalent to 1024 bit integer factorization~\cite{barker_recommendation_2016}); however, building a   circuit for this TCF requires a quantum implementation of Euclid's algorithm, which is  challenging~\cite{haner_improved_2020}.
Thus, we focus on designing quantum circuits implementing Rabin's function, $x^2 \bmod N$.
	
	\subsection{Quantum circuits for \texorpdfstring{$x^2 \bmod N$}{x² mod N}}
	\label{subsec:near-term-circuits}
	
	We explore four different circuits (implementations of these algorithms in \texttt{Python} using the \texttt{Cirq} library are included as supplementary files~\footnote{Code is available at \url{https://github.com/GregDMeyer/quantum-advantage} and is archived on Zenodo~\cite{cirq_code}}).
	The first two are quantum implementations of the Karatsuba and ``schoolbook'' classical integer multiplication algorithms, where we leverage the reversibility optimizations described in Section~\ref{subsec:garbage} (see supplementary information~\cite{SM}).
	The latter pair, which we call the ``phase circuits'' and describe below, are intrinsically quantum algorithms that use Ising interactions to directly compute $x^2 \bmod N$ in the phase.
	Using those circuits, we propose a near-term demonstration of our interactive protocol on a Rydberg-based quantum computer~\cite{levine_parallel_2019,browaeys2020many}; crucially, the so-called ``Rydberg blockade'' interaction \emph{natively} realizes multi-qubit controlled phase rotations, from which the entire circuits shown in Figure 3 are built (up to single qubit rotations).
	A comparison of approximate gate counts for each of the four  circuits can be seen in Table~I in the Methods.
	The Karatsuba algorithm is the most efficient in total gates and circuit depth, while the phase circuits are most efficient in terms of qubit usage and measurement complexity.

	We now describe the two  circuits, amenable to near-term quantum devices, that utilize quantum phase estimation to implement the function $f(x) = x^2 \bmod N$. The intuition behind our approach is as follows: we will compute $x^2/N$ in the phase and  transfer it to an output register via an inverse quantum Fourier transform~\cite{draper_addition_2000, beauregard_circuit_2003}; the modulo operation occurs automatically as the phase wraps around the unit circle, avoiding the need for a separate reduction step.

	In order to implement $\sum_x \ket{x}_\mathsf{x} \ket{x^2 \bmod N}_\mathsf{y}$, we  design a circuit to compute:
	\begin{equation}
	(\mathbb{I} \otimes \mathrm{IQFT}) \; \tilde{\mathcal{U}}_{w_N} \; (\mathbb{I} \otimes \mathrm{H^{\otimes m}}) \ket{x}\ket{0^{\otimes m}} = \ket{x}\ket{w}
	\end{equation}
	where $\mathrm{H}$ is a Hadamard gate, $\mathrm{IQFT}$ represents an inverse quantum Fourier transform, $w \equiv x^2/N = 0.w_1 w_2 \cdots w_m$ is an $m$-bit binary fraction \footnote{We must take $m > n + \mathcal{O}(1)$ to sufficiently resolve the value $x^2 \bmod N$ in post-processing},
	and $\tilde{\mathcal{U}}_{w_N}$ is the diagonal unitary,
	\begin{equation}
	\tilde{\mathcal{U}}_{w_N} \ket{x}\ket{z} = \exp \left(2 \pi i\frac{x^2}{N} z\right) \ket{x}\ket{z}.
	\label{eq:Utilde}
	\end{equation}
	By performing a binary decomposition of the phase in Eqn.~\ref{eq:Utilde}:
	\begin{equation}
	\label{eq:full-phase}
	\exp{\left(2 \pi i\frac{x^2}{N} z\right)} = \prod_{i,j,k} \exp{\left( 2 \pi i \frac{2^{i+j+k}}{N}  x_i x_j z_k\right)},
	\end{equation}
	one immediately finds that  $\tilde{\mathcal{U}}_{w_N}$ is equivalent to applying a series of controlled-controlled-phase rotation  gates of angle,
	\begin{equation}
	\label{eq:single-phase}
	\phi_{ijk} = \frac{2 \pi 2^{i+j+k}}{N} \pmod{2\pi}.
	\end{equation}
	Here, the control qubits are $i,j$ in the $\mathsf{x}$ register, while the target qubit is $k$ in the $\mathsf{y}$ register.
	Crucially, the value of this phase for any $i,j,k$ can be computed classically when the circuit is compiled.

	As depicted in Figure~\ref{fig:circuit-1}, we propose two explicit circuits to implement $\tilde{\mathcal{U}}_{w_N}$, one optimizing for qubit count, and the other optimizing for gate count.
	The first circuit [Fig.~\ref{fig:circuit-1}(a)] takes  advantage of the fact that the output register is  measured immediately after it is computed; this allows one to replace the $m$ output qubits with a single qubit that is measured and reused $m$ times.
	Moreover, by replacing groups of doubly-controlled gates with a Toffoli and a series of singly-controlled gates, one ultimately arrives at an implementation, which requires $n^3/2 + \mathcal{O}(n^2)$ gates, but only $n + \mathcal{O}(1)$ qubits.
	We note that this does require individual measurement and re-use of qubits, which has been a challenge for experiments; recent experiments however have demonstrated this capability \cite{zhu2021demonstration, ryan-anderson_realization_2021}.

	Our second circuit [Fig.~\ref{fig:circuit-1}(b)], which optimizes for gate count, leverages the fact that $\phi_{ijk}$ (Eqn.~\ref{eq:single-phase}) only depends on $i+j+k$,  allowing one to combine gates with a common sum.
	In this case, one can define  $\ell=i+j$ and then, for each value of $\ell$, simply ``count'' the number of values of $i,j$ for which both control qubits are 1.
	By then performing controlled gates off of the qubits of the counter register, one can reduce the total gate complexity by a factor of $n / \log n$, leading to a implementation with $2n^2 \log n + \mathcal{O}(n^2)$ gates.

	\begin{figure}[t!]
	    \centering
		\includegraphics[width=1.0\columnwidth]{./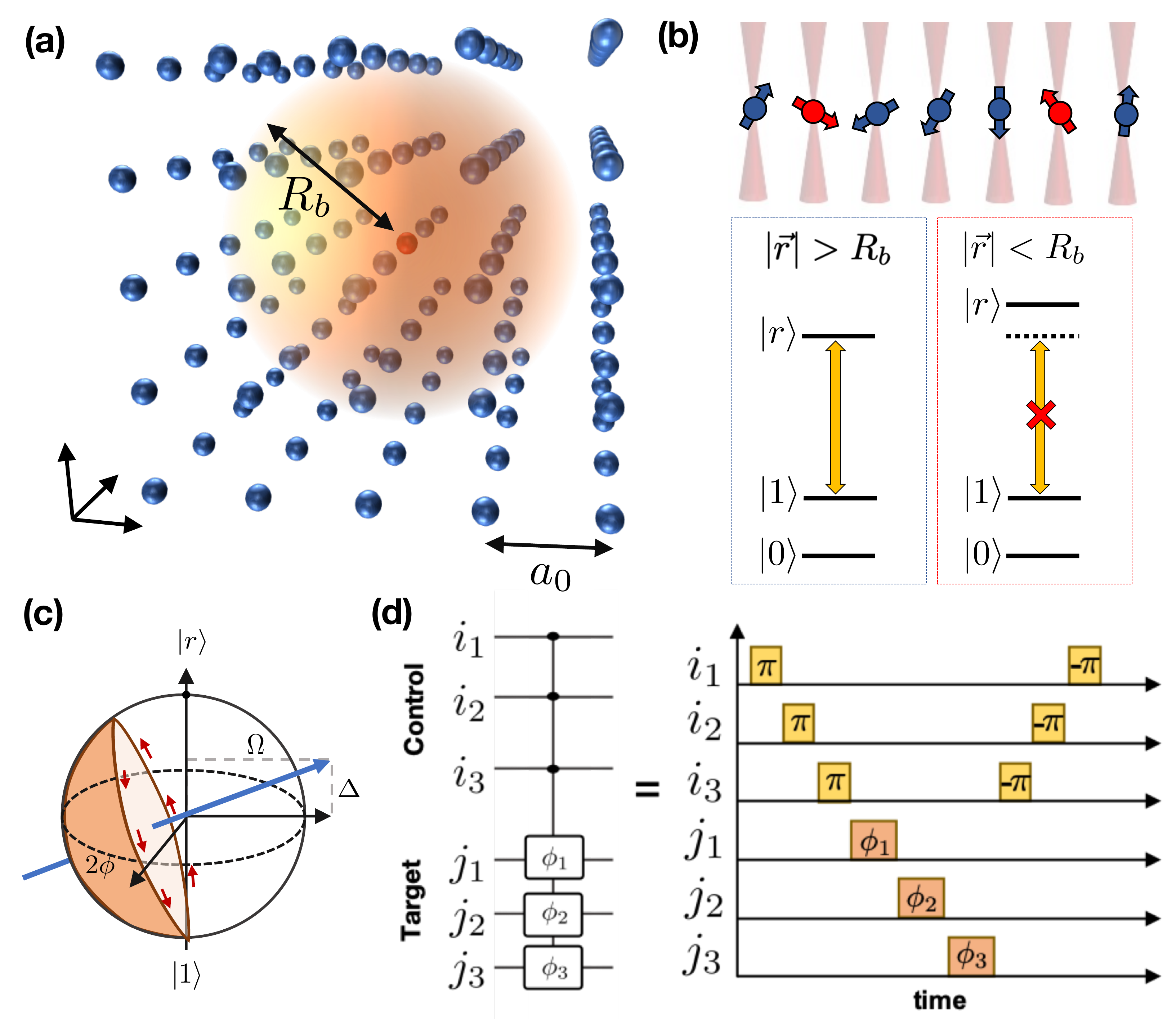}
		\caption{\textbf{a,} Schematic illustration of a three dimensional array of neutral atoms with Rydberg blockade interactions. The blockade radius can be significantly larger than the inter-atom spacing, enabling multi-qubit entangling operations. 
		\textbf{b,} As an example, Rydberg atoms can be trapped in an optical tweezer array. The presence of an atom in a Rydberg excited state (red)  shifts the energy levels of nearby atoms (blue), preventing the driving field  (yellow arrow) from exciting them to their Rydberg state, $|r\rangle$.
		\textbf{c,} A single qubit phase rotation can be implemented by an off-resonant Rabi oscillation between one of the qubit states, e.g., $\ket{1}$, and the Rydberg excited state.  This imprints a tunable, geometric phase $\phi$, which is determined by the detuning $\Delta$ and Rabi frequency $\Omega$.
		\textbf{d,} Multi-qubit controlled-phase rotations are  implemented via a sequence of $\pi$-pulses between the $\ket{0}\leftrightarrow \ket{r}$ transition of control atoms (yellow) and off-resonant Rabi oscillations on the target atoms (orange).  }
		\label{fig:blockade_gates}
	\end{figure}

	\subsection{Experimental implementation}
	\label{subsec:rydberg}

	Motivated by recent advances in the creation and control of many-body entanglement in programmable quantum systems~\cite{zhang_observation_2017,arute_quantum_2019,scholl2020programmable,ebadi2020quantum}, we propose an experimental implementation of our interactive protocol based upon neutral atoms coupled to Rydberg states~\cite{browaeys2020many}.
	We envision a three dimensional system of either alkali or alkaline-earth atoms trapped in an optical lattice or optical  tweezer array [Fig.~\ref{fig:blockade_gates}(a)]~\cite{PhysRevLett.115.043003,wang2016single,kumar2018sorting}.
To be specific, we consider $^{87}\rm Rb$ with an effective qubit degree of freedom encoded in hyperfine states:~$\ket{0} = \ket{F=1,m_F=0}$ and $\ket{1} = \ket{F=2,m_F=0}$.
Gates between atoms are mediated by coupling to a highly-excited Rydberg state $\ket{r}$, whose
large polarizability leads to strong van der Waals interactions.
This microscopic interaction enables the so-called Rydberg ``blockade'' mechanism---when a single atom is driven to its Rydberg state, all other atoms within a blockade radius, $R_b$, become off-resonant from the drive, thereby suppressing their excitation [Fig.~\ref{fig:blockade_gates}(a,b)]~\cite{saffman_quantum_2016}.

Somewhat remarkably, this blockade interaction enables the \emph{native} implementation of all multi-qubit-controlled phase gates depicted in the circuits in Figure \ref{fig:circuit-1}.
In particular,  consider the goal of applying a $C^kR^\ell_\phi$ gate; this gate applies phase rotations, $\{\phi_1, \phi_2, \dots, \phi_\ell\}$, to target qubits $\{j_1, j_2, \dots j_\ell\}$ if all $k$ control qubits $\{i_1, i_2, \dots i_k\}$ are in the $\ket{1}$ state [Fig.~\ref{fig:blockade_gates}(d)].
Experimentally, this can be implemented as follows: (i) sequentially apply (in any order) resonant $\pi$-pulses on the $\ket{0}\leftrightarrow \ket{r}$ transition  for the  $k$ desired control atoms, (ii) off-resonantly drive the $\ket{1} \leftrightarrow \ket{r}$ transition of each target atom with detuning $\Delta$ and Rabi frequency $\Omega$ for a time duration $T = 2\pi /(\Omega^2 + \Delta^2)^{1/2}$ [Fig.~\ref{fig:blockade_gates}(c)], (iii) sequentially apply [in the opposite order as in (i)] resonant $-\pi$-pulses (i.e.~$\pi$-pulses with the opposite  phase)  to the $k$  control atoms to bring them back to their original state.
	The intuition for why this experimental sequence implements the $C^kR^\ell_\phi$ gate is straightforward.   The first step creates a blockade if any of the control qubits are in the $\ket{0}$ state, while the second step imprints a phase, $\phi = \pi ( 1- \Delta/\sqrt{\Delta^2 + \Omega^2})$, on the $\ket{1}$ state, only in the absence of a  blockade.
	Note that tuning the values of $\phi_i$ for each of the target qubits simply corresponds to adjusting the detuning and Rabi frequency of the off-resonant drive in the second step [Fig.~\ref{fig:blockade_gates}(c,d)].

	Demonstrations of our protocol can already be implemented in current generation Rydberg experiments, where a number of essential features have recently been shown, including: 1) the coherent manipulation of individual qubits trapped in a 3D tweezer array~\cite{PhysRevLett.115.043003,wang2016single}, 2) the deterministic loading of atoms in a 3D optical lattice~\cite{kumar2018sorting}, and 3) fast entangling gate operations with fidelities, $F \geq 0.974$~\cite{levine_parallel_2019, graham_rydberg-mediated_2019, madjarov_high-fidelity_2020}.
	In order to estimate the number of entangling gates achievable within decoherence time scales,  let us imagine choosing a Rydberg state with a principal  quantum number $n\approx 70$.
	This yields a strong van der Waals interaction, $V(\vec{r}) = C_6/r^6$, with a $C_6$ coefficient $\sim(2\pi)\,880$~GHz$\cdot\mu$m$^6$~\cite{low2012experimental}.
Combined with a  coherent driving field of Rabi frequency $\Omega \sim (2\pi)\,1-10$~MHz, the  van der Waals interaction can lead to a blockade radius of up to, $R_b = \left( C_6 /\Omega\right)^{1/6} \sim 10\,\mu$m. 
Within this radius, one can arrange $\sim 10^2$ all-to-all interacting qubits, assuming an atom-to-atom spacing of approximately, $a_0\approx 2\mu$m \footnote{We note that this spacing is ultimately limited by a combination of the optical diffraction limit and the orbital size of $n\approx 70$ Rydberg states.}.
In current experiments, the decoherence associated with the Rydberg transition is typically limited by a combination of inhomogeneous Doppler shifts and laser phase/intensity noise, leading to $1/T_2 \sim 10-100$~kHz \cite{de_leseleuc_analysis_2018, levine_parallel_2019, liu_influence_2020}.
Taking everything together, one should be able to perform $\sim 10^3$ entangling gates before decoherence occurs (this is comparable to the number of two-qubit entangling gates possible in other state-of-the-art platforms~\cite{arute_quantum_2019, schafer_fast_2018}).
While this falls short of enabling an immediate full-scale demonstration of classically verifiable quantum advantage, we hasten to emphasize that the ability to directly perform multi-qubit entangling operations significantly reduces the cost of implementing our interactive protocol.
For example,  the standard decomposition of a Toffoli gate uses 6 CNOT gates and 7 $T$ and $T^\dag$ gates, with a gate depth of 12 \cite{nielsen_quantum_2011, shende_cnot-cost_2009, barenco_elementary_1995};  an equivalent three qubit gate can be performed in a single step via the Rydberg blockade mechanism.

	\section{Conclusion and Outlook}
	\label{sec:outlook}

	The interplay between classical and quantum complexities ultimately determines the threshold for any quantum advantage scheme.
	Here, we have proposed a novel interactive protocol for classically verifiable quantum advantage based upon trapdoor claw-free functions; in addition to proposing two new TCFs  [Table~\ref{tab:tcfs}], we also provide explicit quantum circuits that leverage the microscopic interactions present in a Rydberg-based quantum computer.
	Our work allows near-term quantum devices to move one step closer toward a loophole-free demonstration of quantum advantage and also opens the door to a number of promising future directions.

	First, our proof of soundness only applies to classical adversaries; whether it is possible to extend our  protocol's security to quantum adversaries remains an  open question~\footnote{By ``secure against quantum adversaries'' we mean that it is possible to ensure that the quantum prover is actually following the prescribed protocol, to whatever extent is necessary for the desired cryptographic task.}. 
A quantum-secure proof could enable our protocol's use in a number of applications, such as certifiable random number generation~\cite{brakerski_cryptographic_2019} and the verification of arbitrary quantum computations~\cite{mahadev_classical_2018}.
	Second, our work motivates the search for new trapdoor claw-free functions, which can be evaluated in the smallest possible quantum volume.
	Cryptographic primitives such as Learning Parity with Noise (LPN), which are designed for use in low-power  devices such as RFID cards, represent a promising path forward~\cite{pietrzak_cryptography_2012}.
	More broadly, one could also attempt to build modified protocols, which simplify either  the requirements on the cryptographic function or the interactions; 	interestingly, recent work has demonstrated that using random oracles can remove the need for interactions in a TCF-based proof of quantumness~\cite{brakerski_simpler_2020}. 
	Finally, while we have focused our experimental discussions on Rydberg atoms, a number of other  platforms also exhibit  features that facilitate the protocol's implementation.
	For example,  both trapped ions and cavity-QED systems can allow all-to-all connectivity, while superconducting qubits can be engineered to have biased noise~\cite{puri_bias-preserving_2020}.
	This latter feature would allow noise to be concentrated into error modes detectable by our proposed post-selection scheme.

        We gratefully acknowledge the insights of and discussions with A.~Bouland, S.~Garg, A.~Gheorghiu, Z.~Landau, L.~Lewis, and T.~Vidick. 
    We are particularly indebted to Joonhee Choi for insights about Rydberg-based quantum computing.
	This work was supported by the NSF QLCI program through grant number OMA-2016245, and the DOD through MURI grant number FA9550-18-1-0161.
	NYY acknowledges support from the David and Lucile Packard foundation and a Google research award. 
	GDKM acknowledges support from the Department of Defense (DOD) through the National Defense Science \& Engineering Graduate Fellowship (NDSEG) Program.
	SC acknowledges support from the Miller Institute for Basic Research in Science.

	\bibliography{refs}
	\bibliographystyle{apsrev4-2}

\end{document}


\title{Supplementary Information: Classically-Verifiable Quantum Advantage from a Computational Bell Test}
	\author{Gregory D. Kahanamoku-Meyer}
	\affiliation{Department of Physics, University of California at Berkeley, Berkeley, CA 94720}
	\author{Soonwon Choi}
	\affiliation{Department of Physics, University of California at Berkeley, Berkeley, CA 94720}
	\author{Umesh V. Vazirani}
	\affiliation{Department of Electrical Engineering and Computer Science, University of California at Berkeley, Berkeley, CA 94720}
	\author{Norman Y. Yao}
	\affiliation{Department of Physics, University of California at Berkeley, Berkeley, CA 94720}
	
	\maketitle

	\section{Cryptographic proofs of TCF properties}
	\label{si:tcf-proofs}

	Here we prove the cryptographic properties of the trapdoor claw-free functions (TCFs) presented in the Methods section of the main text.
	%
	We base our definitions on the Noisy Trapdoor Claw-free Function family (NTCF) definition given in Definition 3.1 of Ref.~\cite{brakerski_cryptographic_2019}, with certain modifications such as removing the adaptive hardcore bit requirement and the ``noisy'' nature of the functions.

	We emphasize that in the definitions below, we define security only against classical attackers.
	%
	Both the $x^2 \bmod N$ and DDH constructions could be trivially defeated by a quantum adversary via Shor's algorithm; since the purpose of the protocol in this paper is to demonstrate quantum capability, this type of adversary is allowed.
	
	We also note that the TCF definition allows the 2-to-1 property to be ``imperfect''---that is, we allow the fraction of pre-images which have a colliding pair to be less than 1.
	%
	In the protocol, the verifier may simply discard any runs in which the prover supplied an output $y$ value that is not part of a claw, that is, does not have two corresponding inputs.
	%
	This will not affect the prover's ability to pass the classical threshold (since these runs are counted neither for or against the prover); it will only possibly affect the number of iterations of the protocol required to exceed the classical bound with the desired statistical significance. 
	%
	In the definition below we require the fraction of ``good'' inputs be at least a constant (which we set to 0.9); in principle the fraction could be as low as $1/\mathrm{poly}(\lambda)$ without interfering with the protocol's effectiveness.

	\subsection{TCF definition}

	We use the following definition of a Trapdoor Claw-free Function family:

	\begin{definition}
	\label{def:tcf}

	Let $\lambda$ be a security parameter, $I$ a set of function indices, and $X_i$ and $Y_i$ finite sets for each $i \in I$.
	%
	A family of functions
	%
	\[ \mathcal{F} = \{ f_i : X_i \to Y_i \}_{i \in I} \]
	%
	is called a trapdoor claw free (TCF) family if the following conditions hold:

	\begin{enumerate}
		\item \textbf{Efficient Function Generation.} There exists an efficient probabilistic algorithm $\mathsf{Gen}$ which generates a key $i \in I$ and the associated trapdoor data $t_i$:
		%
		\[(i, t_i) \leftarrow \mathsf{Gen}(1^\lambda)\]
		%
		\item \textbf{Trapdoor Injective Pair.} For all indices $i \in I$, the following conditions hold:
		\begin{enumerate}
			\item Injective pair: Consider the set $R_i$ of all tuples $(x_0, x_1)$ such that $f_i(x_0) = f_i(x_1)$.
			%
			Let $X_i' \subseteq X_i$ be the set of values $x$ which appear in the elements of $R_i$.
			%
			For all $x \in X_i'$, $x$ appears in exactly one element of $R_i$; furthermore, there exists a value $\lambda_c$ such that for all $\lambda > \lambda_c$, $|X_i '|/|X_i| > 0.9$.
			%
			\item Trapdoor: There exists an efficient deterministic algorithm $T$ such that for all $y \in Y_i$ and $(x_0, x_1)$ such that $f_i(x_0) = f_i(x_1) = y$, $T(t_i, y) = (x_0, x_1)$.
		\end{enumerate}
		%
		\item \textbf{Claw-free.} For any non-uniform probabilistic polynomial time (nu-PPT) classical Turing machine $\mathcal{A}$, there exists a negligible function $\epsilon(\cdot)$ such that
		%
		\[ \Pr \left[ f_i(x_0) = f_i(x_1) \land x_0 \neq x_1 | (x_0, x_1) \leftarrow \mathcal{A}(i) \right] < \epsilon(\lambda) \]
		%
		where the probability is over both choice of $i$ and the random coins of $\mathcal{A}$.
		%
		\item \textbf{Efficient Superposition.} There exists an efficient quantum circuit that on input a key $i$ prepares the state
		%
		\[ \frac{1}{\sqrt{|X_i|}} \sum_{x \in X_i} \ket{x} \ket{f_i(x)} \]
	\end{enumerate}

	\end{definition}

	\subsection{Proof of \texorpdfstring{$x^2 \bmod N$}{x² mod N} TCF}

	In this section we prove that the function family $\mathcal{F}_\mathrm{Rabin}$ (defined in Methods) is a TCF by demonstrating each of the properties of Definition~\ref{def:tcf}.
	%
	Most of the properties follow directly from properties of the Rabin cryptosystem \cite{rabin_digitalized_1979}; we reproduce several of the arguments here for completeness.

	\begin{thm}
	The function family $\mathcal{F}_\mathrm{Rabin}$ is trapdoor claw-free, under the assumption of hardness of integer factorization.
	\end{thm}

	\begin{proof}
	We demonstrate each of the properties of Definition~\ref{def:tcf}:

	\begin{enumerate}
		\item \textbf{Efficient Function Generation.} Sampling large primes to generate $p, q$ and $N$ is efficient \cite{rabin_digitalized_1979}.

		\item \textbf{Trapdoor Injective Pair.}
		\begin{enumerate}
			\item Injective pair:
			%
			By definition of the function, $Y_i$ is the set of quadratic residues modulo $N$. 
			%
			For any $y \in Y_i$, consider the two values $a < p/2$ and $b < q/2$ such that $a^2 \equiv y \bmod p$ and $b^2 \equiv y \bmod q$.
			%
			These values exist because $y$ is a quadratic residue modulo $pq$, therefore it is also a quadratic residue modulo $p$ and $q$.
			%
			Define $c \equiv 1 \bmod p \equiv 0 \bmod q$ and $d \equiv 0 \bmod p \equiv 1 \bmod q$.
			%
			The following four values $x$ in the range $[0, N)$ have $x^2 \equiv y \bmod N$: $ac + bd, ac-bd, -ac+bd, -ac-bd$.
			%
			Exactly two of these values are in the domain $[N/2]$ of the TCF, and constitute the injective pair; moreover, these two values will be unique as long as $a,b \neq 0$.
			%
			Thus we may define the set $X'_i = \{x \in [N/2] | x \not\equiv 0 \bmod p \land x \not\equiv 0 \bmod q \}$.
			%
			There exist exactly $((p-1)+(q-1))/2$ multiples of $p$ or $q$ in the set of integers $X_i = [N/2]$, thus $|X_i '|/|X_i| = 1 - ((p-1)+(q-1))/N$.
			%
			Recall that $p,q$ are defined to have length $\lambda/2$; if we let $\lambda_c = 12$, then $p, q > 2^5 = 32$.
			%
			Since $1 - (31+31)/32^2 > 0.9$ and $|X_i '|/|X_i|$ increases monotonically with $\lambda$, we have $|X_i '|/|X_i| > 0.9$ for all $\lambda > \lambda_c$.
			%
			\item Trapdoor: Because $p$ and $q$ were selected to have $p \equiv q \equiv 3 \bmod 4$, $a$ and $b$ in the expressions above can always be computed as $a = y^{(p+1)/4} \bmod p$ and $b = y^{(q+1)/4} \bmod q$, and then the preimages can be computed as defined above.
		\end{enumerate}
		%
		\item \textbf{Claw-free.} We show that knowledge of a claw $x_0, x_1$ can be used directly to factor $N$.
		%
		Writing the claw as $(ac+bd, ac-bd)$ using the values $a, b, c, d$ from above, we have $x_0 + x_1 = 2ac$.
		%
		Because $c = 0 \bmod q$, $\gcd(x_0 + x_1, N) = q$ can be efficiently computed, which then also yields $p = N/q$.
		%
		Thus, an algorithm that could be used efficiently to find claws could be equally used to efficiently factor $N$, which we assume to be hard.
		%
		\item \textbf{Efficient Superposition.}
		The set of preimages $X_i$ is the set of integers $[N/2]$.
		%
		A uniform superposition $\sum_{x \in X_i} \ket{x}$ may be computed by generating a uniform superposition of all bitstrings of length $n$ (via Hadamard gate on every qubit), and then evaluating a comparator circuit that generates the state $\sum \ket{x}\ket{x < N/2}$ where $\ket{x < N/2}$ is a bit on an ancilla.
		%
		If this ancilla is then measured and the result is $\ket{1}$, the state is collapsed onto the superposition $\sum_{x \in X_i} \ket{x}$ (if the result is $\ket{0}$ the process should simply be repeated).
		%
		Then a multiplication circuit to an empty register may be executed to generate the desired state $\sum_{x \in X} \ket{x} \ket{x^2 \bmod N}$.

	\end{enumerate}
	\end{proof}

	\subsection{Proof of Decisional Diffie-Hellman TCF}

	We now prove that $\mathcal{F}_\mathrm{DDH}$ (defined in Methods) forms a trapdoor claw-free function family.

	\begin{thm}
	The function family $\mathcal{F}_\mathrm{DDH}$ is trapdoor claw-free, under the assumption of hardness of the decisional Diffie-Hellman problem for the group $\mathbb{G}$.
	\end{thm}

	\begin{proof}
	We demonstrate each of the properties of Definition~\ref{def:tcf}:

	\begin{enumerate}
		\item \textbf{Efficient Function Generation.}
		Each step of $\mathsf{Gen}$ is efficient by inspection.

		\item \textbf{Trapdoor Injective Pair.}
		\begin{enumerate}
			\item Injective pair:
			First we note that the matrix $\mathbf{M}$ is chosen to be invertible, thus $f_0$ and $f_1$ are one-to-one.
			%
			Therefore for all $\mathbf{x}_0 \in X_i$, at most one other preimage $\mathbf{x}_1 \in X_i$ has $f_i(\mathbf{x}_0) = f_i(\mathbf{x}_1)$.
			%
			Furthermore, since colliding pairs have the structure $(0||\mathbf{x}_0'), (1||\mathbf{x}_1')$ with $\mathbf{x}_0' = \mathbf{x}_1' + \mathbf{s}$ and $\mathbf{s} \in \{0, 1\}^k$, the only preimages that will \emph{not} form part of a colliding pair are those where $\mathbf{x}_0'$ has a zero element at an index where $\mathbf{s}$ is nonzero, or $\mathbf{x}_1'$ has an element equal to $d-1$ where $\mathbf{s}$ is nonzero (the vector element will be outside of the range of vector elements for the other vector).
			%
			Thus $|X_k '|/|X_k| > \left(1-\nicefrac{1}{d} \right)^{k}$.
			%
			Since $d \sim \mathcal{O}(k^2)$ and $k \sim \mathcal{O}(\lambda)$, we have $\lim_{\lambda \to \infty} |X_k '|/|X_k| = 1$ with $|X_k '|/|X_k|$ monotonically increasing.
			%
			Therefore, there exists a value $\lambda_c$ such that $|X_k '|/|X_k| > 0.9$ for all $\lambda > \lambda_c$.
			%
			(We note that if we set $k=\lambda$ and $d=k^2$, then $\lambda_c = 10$.)
			%
			\item Trapdoor: The steps of the algorithm $T$ are efficient by inspection. Crucially, the discrete logarithm of each vector element is possible by brute force, because the elements of $\mathbf{x}_0$ only take values up to polynomial in $\lambda$.
		\end{enumerate}
		%
		\item \textbf{Claw-free.} An algorithm which could efficiently compute a claw $(0||\mathbf{x}_0', 1||\mathbf{x}_1')$ could then trivially compute the secret vector $\mathbf{s} = \mathbf{x}_0' - \mathbf{x}_1'$.
		%
		For any matrix $\mathbf{M'}$, the existence of an algorithm to uniquely determine $\mathbf{s}$ from $(g^\mathbf{M'}, g^\mathbf{M's})$ would directly imply an algorithm for determining whether $\mathbf{M'}$ has full rank.
		%
		But DDH implies it is computationally hard to determine whether a matrix $\mathbf{M}'$ is invertible given $g^\mathbf{M'}$~\cite{peikert_lossy_2008, freeman_more_2010}.
		%
		Therefore DDH implies the claw-free property.
		%
		\item \textbf{Efficient Superposition.} Because $d$ is a power of two, a superposition of all possible preimages $\mathbf{x}$ can be computed by applying Hadamard gates to every qubit in a register all initialized to $\ket{0}$.
		%
		The function $f$ can then be computed by a quantum circuit implementing a classical algorithm for the group operation of $\mathbb{G}$.

	\end{enumerate}
	\end{proof}
	
	\section{Overview of Trapdoor Claw-free Functions}
	
	In this section, we provide a brief overview of the cryptographic concepts upon which this work relies.
	
	\medskip	
	
	Foundational to the field of cryptography is the idea of a \emph{one-way function}.
	%
	Informally, this type of function is easy to compute, but hard to invert.
	%
	Here, ``easy'' means that the function can be evaluated in time polynomial in the length of the input.
	%
	By ``hard'' we mean that the cost of the best algorithm to invert the function is superpolynomial in the length of the input.
	%
	In practice, for a given one-way function we desire that there exists a particular problem size (input length) for which the function can be evaluated fast enough that it is not overly costly to use, but for which inversion would be infeasible for even an adversary with large (but realistic) computing power.
	%
	One way functions can be used directly to construct many useful cryptographic schemes, including pseudorandom number generators, private-key encryption, and secure digital signatures.
	
	In this work, we rely on a specific type of one-way function called a \emph{trapdoor claw-free function} (TCF).
	%
	This class of functions has two additional features.
	
	First, it has a \emph{trapdoor}.
	%
	This means that while the function is hard to invert in general, with the knowedge of some secret data (the trapdoor key) inversion becomes easy.
	%
	This secret data should be easy to generate when the function is chosen (from a large family of similar functions), but should be hard to find given just the description of the function itself.
	%
	For example, in this work we describe the function $x^2 \bmod N$, with $N$ the product of two primes.
	%
	The trapdoor is the factorization of $N$.
	%
	It is easy to generate this function along with the trapdoor, by simply selecting two primes and multiplying them together.
	%
	However, under the assumption of hardness of integer factorization, given only the function description (namely the value $N$) it is computationally hard to find the trapdoor (the factors $p$ and $q$).
	
	The second additional feature of a TCF is that it is \emph{claw-free}.
	%
	This means that the function is two-to-one (has two inputs that map to each output), but it is computationally hard to \emph{find} two such colliding inputs without the trapdoor.
	%
	Note that if it were possible to invert the function it would be trivial to find a collision (by picking an input, computing the function to get the output corresponding to it, and then inverting the function to find the second input mapping to that output).
	%
	However the claw-free property is a bit stronger than the hardness of inversion: there exist some two-to-one functions which are one-way but not claw-free.
	
	Importantly, in this work we only require that breaking the claw-free property is hard classically---indeed, the claw-free property of the DDH and $x^2 \bmod N$ TCFs described here can be fully broken by quantum computers.
	%
	However, perhaps surprisingly, we do not \emph{require} that breaking the claw-free property is easy for a quantum machine.
	%
	In fact, the claw-free property of the LWE and Ring-LWE based TCFs remains secure even against quantum attacks.
	%
	This corresponds to a very powerful property of the protocol in this paper, and other related protocols: that a quantum computer can pass the test without actually being able to find a claw.
	%
	This subtle distinction stems from the fact that the quantum prover generates a \emph{superposition} over two inputs that collide.
	%
	No measurement of such a state can yield both superposed values classically in full, but the test is designed to not require both values---just the results of an appropriate measurement of the superposition.
	%
	A classical cheater, on the other hand, still cannot pass the test because the idea of a superposition does not exist classically.

	\section{Explanation of circuit complexities}
	\label{app:circ-complexities}

	Here we describe each of the asymptotic circuit complexities listed in Table~I of the main text.
	%
	For these estimates we drop factors of $\log\log n$ or less.
	%
	In all cases, we assume integer multiplication can be performed in time $\mathcal{O}(n \log n)$ using the Schonhage-Strassen algorithm.

	We emphasize that the value of $n$ necessary to achieve classical hardness in practice varies widely among these functions, and 	also that the asymptotic complexities here may not be applicable at practical values of $n$.

	\textbf{LWE} \cite{brakerski_cryptographic_2019,regev_lattices_2005} The LWE cost is dominated by multiplying an $\mathcal{O}(n \log n) \times n$ matrix of integers by a length $n$ vector.
	%
	The integers are of length $\log n$, so each multiplication is expected to take approximately $\mathcal{O}(\log n)$ time.
	%
	Thus, the evaluation of the entire function requires $\mathcal{O}(n^2 \log^2 n)$ operations.

	$\boldsymbol{x^2 \bmod N}$ \cite{rabin_digitalized_1979} The function can be computed in time $\mathcal{O}(n \log n)$ using Schonhage-Strassen multiplication algorithm and Montgomery reduction for the modulus.

	\textbf{Ring-LWE} \cite{brakerski_simpler_2020, lyubashevsky_ideal_2012, de_clercq_efficient_2015, roy_compact_2014} Ring-LWE is dominated by the cost of multiplying one polynomial by $\log n$ other polynomials.
	%
	Through Number Theoretic Transform techniques similar to the Schonhage-Strassen algorithm, each polynomial multiplication can be performed in time $\mathcal{O}(n \log n)$, so the total runtime is $\mathcal{O}(n \log^2 n)$.
	%
	We note that integer multiplication and polynomial multiplication can be mapped onto each other, so the runtimes for $x^2 \bmod N$ and Ring-LWE scale identically except for the fact that Ring-LWE requires $\log n$ multiplications instead of $\mathcal{O}(1)$.

	\textbf{Diffie-Hellman} \cite{diffie_new_1976, peikert_lossy_2008, freeman_more_2010} The Diffie-Hellman based construction defined in Methods requires performing multiplication of a $k \times k$ matrix by a vector, with $k \sim \mathcal{O}(n)$.
	%
	However, the ``addition'' operation for the matrix-vector multiply is the group operation of $\mathbb{G}$; we expect this operation to have complexity at least $\mathcal{O}(n \log n)$ (for e.g. integer multiplication).
	%
	The exponentiation operations have exponent at most $d \sim \mathcal{O}(k^2)$, so can be performed in $\mathcal{O}(\log n)$ group operations.
	%
	So, for each of the $k^2$ matrix elements one must perform an operation of complexity $\mathcal{O}(n \log^2 n)$, yielding a total complexity of $\mathcal{O}(n^3 \log^2 n)$.

	\textbf{Shor's Algorithm} \cite{shor_polynomial-time_1997} Allowing for the use of Schonhage-Strassen integer multiplication, Shor's algorithm requires $\mathcal{O}(n^2 \log n \log\log n)$ gates \cite{zalka_fast_1998}.

	\section{Optimal classical algorithm}
	\label{app:optimal-classical}

	Here we provide an example of a classical algorithm that saturates the probability bound of Theorem~2 of the main text.
	%
	It has $p_x = 1$ and $p_\mathrm{CHSH} = \nicefrac{3}{4}$.

	For a TCF $f : X \to Y$, consider a classical prover that simply picks some value $x_0 \in X$, and then computes $y$ as $f(x_0)$, without ever having knowledge of $x_1$.
	%
	If the verifier requests a projective $x$ measurement, they always return $x_0$, causing the verifier to accept with $p_x = 1$.
	%
	In the other case (performing rounds 2 and 3 of the protocol), upon receiving $r$ they compute $b_0 = x_0 \cdot r$.
	%
	The cheating prover now simply assumes that $x_0 \cdot r = x_1 \cdot r$, and thus that the correct single-qubit state that would be held by a quantum prover is $\ket{b_0}$, and returns measurement outcomes accordingly.
	%
	With probability $\nicefrac{1}{2}$, $\ket{b_0}$ is in fact the correct single-qubit state; in this case they can always cause the verifier to accept.
	%
	On the other hand, if $x_0 \cdot r \neq x_1 \cdot r$, the correct state is either $\ket{+}$ or $\ket{-}$. 
	%
	With probability $\frac{1}{2}$, the measurement outcome reported by the cheating prover will happen to be correct for this state too.
	%
	Overall, this cheating prover will have $p_\mathrm{CHSH} = (1 + \frac{1}{2})/2 = \frac{3}{4}$.

	Thus we see $p_x + 4p_\mathrm{CHSH} - 4 = 1 + 4\cdot\frac{3}{4} - 4 = 0$ which saturates the bound.

	\section{Quantum circuits for Karatsuba and schoolbook multiplication}
	\label{app:large-circuits}

	Classically, multiplication of large integers is generally performed using recursive algorithms such as Schonhage-Strassen \cite{schonhage_schnelle_1971} and Karatsuba which have complexity as low as $\mathcal{O}(n \log n)$.
	%
	In the quantum setting, the need to store garbage bits at each level of recursion has limited their usefulness~\cite{kowada_reversible_2006, parent_improved_2017}.
	%
	There does exist a reversible construction of Karatsuba multiplication that uses a linear number of qubits \cite{gidney_asymptotically_2019}, but due to overhead required for its implementation it does not begin to outperform schoolbook multiplication until the problem size reaches tens of thousands of bits.

	Leveraging the irreversibility described in Section~IID of the main text, we are able use these recursive algorithms directly, without needing to maintain garbage bits for later uncomputation.
	%
	We implement both the $\mathcal{O}(n^{1.58})$ Karatsuba multiplication algorithm and the simple $\mathcal{O}(n^2)$ ``schoolbook'' algorithm.
	%
	Due to efficiencies gained from discarding garbage bits, we find that the Karatsuba algorithm already begins to outcompete schoolbook multiplication at problem sizes of under 100 bits.
	%
	Thus Karatsuba seems to be the best candidate for ``full-scale'' tests of quantum advantage at problem sizes of $n\sim500-1000$ bits.
	%
	We also note that the Schonhage-Strassen algorithm scales even better than Karatsuba as $\mathcal{O}(n \log n \log \log n)$.
	%
	However, even in classical applications it has too much overhead to be useful at these problem sizes.
	%
	We leave its potential quantum implementation to a future work.

	The multiplication algorithms just described do not include the modulo $N$ operation, it must be performed in a separate step.
	%
	We implement the modulo using only two classical-quantum multiplications and one addition via Montgomery reduction \cite{montgomery_modular_1985}.
	%
	Montgomery reduction does introduce a constant $R'$ into the product, but this factor can be removed in classical post-processing after $y=x^2 R' \bmod N$ is measured.

	Finally, we note that at the implementation level, optimizing \emph{classical} circuits for modular integer multiplication has received significant study in the context of performing cryptography on embedded devices and FPGAs~\cite{javeed_design_2016, moralessandoval_scalable_2016, yang_efficient_2016}.
	%
	Mapping such optimized circuits into the quantum context may be a promising avenue for further research.

	\section{Details of post-selection scheme}
	\label{app:postselection}

	In this section we describe several details of the post-selection scheme proposed in Section~IIC of the main text.

	\subsection{Quantum prover with no phase coherence saturates the classical bound}

	Consider the two states $\ket{\psi_{\pm}} = (\ket{x_0} \pm \ket{x_1})_\mathsf{x} \ket{y}_\mathsf{y}$ for some claw $(x_0, x_1)$ with $y = f_k (x_0) = f_k (x_1)$.
	%
	Note that $\ket{\psi_+}$ is the state that would be held by a noise-free prover.
	%
	Suppose a noisy quantum prover is capable of generating the mixed state
	%
	\begin{equation}
	\rho_\delta = (1/2 + \delta) \ket{\psi_+} \bra{\psi_+} + (1/2 - \delta) \ket{\psi_-} \bra{\psi_-}.
	\end{equation}
	%
	In words, they are able to generate a state that is a superposition of the correct bitstrings, but with the correct phase only $1/2 + \delta$ fraction of the time.
	%
	Here we show that such a prover can exceed the classical threshold of Theorem~2 of the main text, whenever $\delta > 0$.
	%
	We proceed by examining this prover's behavior during the protocol.

	First, we note that if the verifier requests a projective $x$ measurement after Round 1 of the protocol, this prover will always succeed---they simply measure the $\mathsf{x}$ register as instructed, and the phase is not relevant.
	%
	Thus, using the notation of Theorem~2, $p_x = 1$.
	%
	With this value set, to exceed the bound we must achieve $p_\mathrm{CHSH} > 3/4$.
	%
	Naively performing the rest of the protocol as described in the main text does not exceed the bound when $\delta$ is small.
	%
	However, the noisy prover can exceed the bound if they adjust the angle of their measurements in the third round of protocol (but preserve the sign of the measurement requested by the prover).
	%
	We now demonstrate how.

	Define $\ket{\phi}$ as the ``correct'' single-qubit state at the end of Round 2---one of $\{\ket{0}, \ket{1}, \ket{+}, \ket{-}\}$.
	%
	Let $f_\updownarrow$ be the probability that our noisy prover holds the correct state when $\ket{\phi} \in \{\ket{0}, \ket{1} \}$, and $f_\leftrightarrow$ the corresponding probability when $\ket{\phi} \in \{\ket{0}, \ket{1} \}$.
	%
	In the first case, the potential phase error of our prover does not affect the single-qubit state, so $f_{\updownarrow} = 1$.
	%
	In the other case, the state is only correct when the phase is correct, so $f_{\leftrightarrow} = 1/2 + \delta$.
	%
	We see that our prover will hold the correct single-qubit state with probability greater than $3/4$.
	%
	But, if they naively measure in the prescribed off-diagonal basis $\theta \in \{\pi/4, -\pi/4\}$ from the verifier, for small $\delta$ their success probability will be less than $3/4$.
	%
	This can be rectified by adjusting the rotation angle of the measurement basis.

	Letting $\pm \theta'$ define the pair of measurement angles used by the prover in step 3 of the protocol (nominally $\theta' = |\theta| = \pi/4$), we can now express the prover's success probability $p_\mathrm{CHSH}$ as
	%
	\begin{equation}
	\label{eq:pm_theta}
	p_m = \frac{1}{2} \left[ \cos^2 \left( \frac{\theta'}{2} \right) f_\updownarrow + \cos^2 \left( \frac{\theta'}{2} - \frac{\pi}{4} \right) f_\leftrightarrow + \sin^2 \left( \frac{\theta'}{2} \right) (1 - f_\updownarrow) + \sin^2 \left( \frac{\theta'}{2} - \frac{\pi}{4} \right) (1 - f_\leftrightarrow) \right]
	\end{equation}
	%
	If the prover measures with $\theta' = \pi/4$ as prescribed in the protocol, the success rate will be $p_\mathrm{CHSH} \approx 0.68 + \mathcal{O}(\delta) < 3/4$.
	%
	However, if they instead adjust their measurement angle to $\theta = \delta$, they instead achieve $p_\mathrm{CHSH} = 3/4 + 3 \delta^2 / 8 - \mathcal{O}(\delta^3)$, which exceeds the classical bound (provided that $\delta$ is large enough to be noticeable).

	In practice, both $f_\updownarrow$ and $f_\leftrightarrow$ are likely to be less than one; the optimal measurement angle can be determined as
	%
	\begin{equation}
	\label{eq:opt_theta}
	\theta_\mathrm{opt}' = \tan^{-1} \left( \frac{2 f_\leftrightarrow - 1}{2 f_\updownarrow - 1} \right)
	\end{equation}
	%
	which is the result of optimizing Equation~\ref{eq:pm_theta} over $\theta'$.
	%
	In a real experiment, it would be most effective to empirically determine $f_\updownarrow$ and $f_\leftrightarrow$ and then use Equation~\ref{eq:opt_theta} to determine the optimal measurement angle.

	\vspace{1em}

	\subsection{Details of simulation and error model}

	We now describe the details of the numerical simulation that was used to generate Figure~2 of the main text.
	%
	For several values of the overall circuit fidelity $\mathcal{F}$, we established a per-gate fidelity as $f = \mathcal{F}^{1/N_g}$ where $N_g$ is the number of gates in the $x^2 \bmod N$ circuit.
	%
	We then generated a new circuit to compute the function $(3^a x)^2 \bmod 3^{2a} N$ for various values of $a$ (see next subsection for an explanation of the choice $k=3^a$).
	%
	For each gate in the new circuit, with probability $1-f$ we added a Pauli ``error'' operator randomly chosen from $\{X, Y, Z\}$ to one of the qubits to which the gate was being applied.

	For the simulation, we randomly chose two primes $p$ and $q$ that multiplied to yield an integer $N$ of length $512$ bits.
	%
	We then randomly chose a large set of colliding preimage pairs, and simulated the circuit separately for each such preimage (which is classically efficient, since the circuits only consist of $X$, CNOT, and Toffoli gates).
	%
	The relative phase between each pair of preimages (due to error gates) was tracked explicitly during the simulation.
	%
	Finally, the expected success rate of the prover was determined by analyzing the correctness of the bitstrings and their relative phase at the end of the circuit.
	
	The primes $p$ and $q$ used to generate Figure 2 of the main text are (in base 10):
	
	\begin{verbatim}
	p = 113287732919697174280284729511923238986362403955638184856698528941220766063369
	q = 98359967382337110635377957241353362183812709461386334819166502848512740692727
	\end{verbatim}

	\subsection{Choice of \texorpdfstring{$k=3^a$}{k=3{\textasciicircum}a} to improve postselection for \texorpdfstring{$x^2 \bmod N$}{x² mod N}}

	In the previous subsection, we map the TCF $f_N = x^2 \bmod N$ to the function $f_N' = (kx)^2 \bmod k^2 N$.
	%
	To achieve this at the implementation level, we may use essentially the same circuit for modular multiplication; the only new requirement is to efficiently generate a superposition of multiples $kx$ in the $\mathsf{x}$ register.
	%
	We generate this superposition by starting with a uniform superposition over values $x$ and then multiplying by $k$.

	Normally, quantum multiplication circuits (like those we use to evaluate $x^2 \bmod N$) perform an out-of-place multiplication, where the result is stored in a new register.
	%
	In this case, however, it is preferable to do the multiplication ``in-place,'' where the result is stored in the input register itself---this way the $y$ value is computed directly from the input register and thus is more likely to reflect errors that may occur in the input.

	In general, performing in-place multiplication is complicated, particularly on a quantum register, because the input is being modified as it is being consumed (not to mention concerns about reversibility).
	%
	However, multiplication by small constants is much simpler to implement.
	%
	By setting $k$ to a power of three, we are able to implement the in-place multiplication by performing a sequence of in-place multiplications by $3$, which can each be performed quite efficiently (see implementation in the attached \texttt{Cirq} code~\footnote{Code is available at \url{https://github.com/GregDMeyer/quantum-advantage} and is archived on Zenodo~\cite{cirq_code}}).
	
	\subsection{Theory prediction of Figure 2 of the main text}
	
	For the dashed ``theory prediction'' lines of Figure 2 of the main text, we predicted the success probabilities under two assumptions (which the numerical experiments are intended to test). 
	%
	First, among noisy runs where at least one bit flip error occurs, the output bitstring is approximately uniformly distributed.
	%
	Second, we assume that with at least one phase flip error, the probability that the phase is correct in the final state is $1/2$.
	
	Under these assumptions, we compute the predicted success rates $p_x$ and $p_\mathrm{CHSH}$ as follows:
	%
	\begin{enumerate}
	\item For a given overall fidelity $\mathcal{F}$ of the original $x^2 \bmod N$ circuit containing $N_g$ gates, compute a per-gate fidelity $f = \mathcal{F}^{1/N_g}$. Then compute the expected overall fidelity $\mathcal{F}'$ of running the slightly larger $(kx)^2 \bmod k^2N$ containing $N_g'$ gates as $f^{N_g'}$.
	\item Using $\mathcal{F}'$ and the given error model (see ``Details of simulation and error model'' section above), compute three disjoint probabilities: that no errors occur, that only phase errors occur, or that at least one bit flip error (and possibly also phase errors) occurs.
	\item Compute the probability that the output will pass postselection, which includes both cases with no bit flip errors and those that are corrupted but happen to pass postselection by chance.
	\item Normalizing to only those runs that pass postselection, compute $p_x$ and $p_\mathrm{CHSH}$:
	\begin{enumerate}
	\item $p_x$ is computed as the probability that no bit flip errors occurred (among those runs that pass postselection). This is a lower bound (that seems intuitively tight); it assumes a negligible probability that the measured pair $(x, y)$ still has $y = f(x)$ despite bit flip errors.
	\item $p_\mathrm{CHSH}$ is computed by finding the probability that no errors occurred that would affect the single-qubit state at the end of round 2. When the correct single-qubit state should be polarized along $Z$, this is taken to be the probability that no bit flip errors occurred (phase errors are allowed since they will not affect this state); when the correct state should be polarized along $X$, it is taken as the probability that no errors at all have occurred. In these ``no-error'' cases, we compute the verifier's probability of accepting by applying the adjusted measurement basis described in the first sub-section above, ``Quantum prover with no phase coherence saturates the classical bound''. Finally, for the case that there was an error that could affect the single-qubit state, the probability that the verifier receives a correct measurement outcome is taken to be $1/2$ (the single-qubit state is taken to be maximally mixed). 
	\end{enumerate}
	\item Compute the measure of ``quantumness'' from $p_x$ and $p_\mathrm{CHSH}$.
	\item Compute the estimate runtime by multiplying the increase in quantum circuit size by the expected number of iterations required to pass postselection (which is computed from the analysis above).
	\end{enumerate}

    \bibliography{refs}
	\bibliographystyle{apsrev4-2}